\title{A New Approach to Output-Sensitive Voronoi Diagrams and Delaunay Triangulations}
\author{Gary L. Miller and Donald R. Sheehy}
\date{}
\newtheorem{theorem}{Theorem}
\newtheorem{lemma}[theorem]{Lemma}
\newcommand{\MeshVoronoi}{\textsc{MeshVoronoi}\xspace}
\newcommand{\R}{\mathbb{R}}
\newcommand{\spread}{\Delta}
\newcommand{\dist}{\mathrm{d}}
\newcommand{\vor}{\mathrm{Vor}}
\newcommand{\del}{\mathrm{Del}}
\newcommand{\vol}{\mathrm{Vol}}
\newcommand{\fs}{\mathbf{f}}
\newcommand{\ch}{\mathrm{ch}}
\newcommand{\poly}{\mathrm{poly}}
\newcommand{\ball}{\mathrm{ball}}
\newcommand{\aspect}{\mathrm{aspect}}
\renewcommand{\because}[1]{& \left[\text{#1}\right]}
\newcommand{\shortversion}[1]{}
\begin{document}
  
  \maketitle

  \begin{abstract}
  We describe a new algorithm for computing the Voronoi diagram of a set of $n$ points in constant-dimensional Euclidean space.
  The running time of our algorithm is $O(f \log n \log \spread)$ where $f$ is the output complexity of the Voronoi diagram and $\spread$ is the spread of the input, the ratio of largest to smallest pairwise distances.
  Despite the simplicity of the algorithm and its analysis, it improves on the state of the art for all inputs with polynomial spread and near-linear output size.
  The key idea is to first build the Voronoi diagram of a superset of the input points using ideas from Voronoi refinement mesh generation.
  Then, the extra points are removed in a straightforward way that allows the total work to be bounded in terms of the output complexity, yielding the output sensitive bound.
  The removal only involves local flips and is inspired by kinetic data structures.
\end{abstract}

  \section{Introduction} 
\label{sec:introduction}

  Voronoi diagrams and their duals, Delaunay triangulations, are ubiquitous in computational geometry, both as a source of interesting theory and a tool for applications~\cite{aurenhammer91voronoi}.
  Algorithms for planar Voronoi diagrams abound where optimal algorithms are known. 
  In higher dimensions, the situation is complicated by the large gap between the best-case and worst-case output complexity.
  A Voronoi diagram of $n$ points in $\R^d$ can have between $\Theta(n)$ and $\Theta(n^{\lceil d/2\rceil})$ faces~\cite{klee80complexity,seidel87number} (we suppress constant factors that only depend on $d$).
  This motivates the search for \emph{output-sensitive} algorithms and analysis, where the time and space guarantees depend on both the input size $n$ and the number of output faces $f$.

  
  Computing Voronoi diagrams in $\R^d$ reduces to computing convex hulls in $\R^{d+1}$.
  This relationship is perhaps most clear in the dual, where the Delaunay triangulation is the projection of the lower hull of the input points lifted into $\R^{d+1}$ by the standard \emph{parabolic lifting}:
  \[
    (x_1,\ldots,x_d) \mapsto \left(x_1,\ldots,x_d, \sum_{i=1}^d x_i^2\right).
  \]
  All of the previous literature on output-sensitive Voronoi diagrams in higher dimensions is directed at solving the more general problem of computing convex hulls.\footnote{
  To avoid confusion, when reporting running times of known algorithms, we always give the results as they apply to Voronoi diagrams rather than convex hulls.}
  
  Seidel gave an algorithm based on polytope shelling that runs in $O(n^2 + f\log n)$ time~\cite{seidel86constructing}.
  The quadratic term is from a preprocess that solves a $d$-dimensional linear program for each input point.
  Matousek and Schwartzkopf showed how to exploit the common structure in these linear programs to improve the running time to $O(n^{2-2/(\lceil d/2 \rceil+1)}\log^{O(1)}n + f\log n)$~\cite{matousek92linear}.
  
  Another paradigm of algorithms uses the dual notions of gift-wrapping~\cite{swart85finding} and pivoting~\cite{avis92pivoting}.
  Both approaches can enumerate the facets of a simple polytope in $O(nf)$ time, thus paying approximately linear time per face.
  Since Voronoi diagrams have at least $n$ faces, these methods do not improve on the LP-based methods except that they avoid the exponential dependence on the dimension inherent in such methods.
  
  Chan gave an algorithm based on gift-wrapping that runs in $O(n
  \log f + (nf)^{1-1/(\lceil d/2\rceil + 1)}\log^{O(1)}n)$ time~\cite{chan96output-sensitive}.
  A later work by Chan et al.\ gave an algorithm that runs in 
  $O((n + (nf)^{1-1/\lceil (d+1)/2\rceil} + fn^{1-2/\lceil (d+1)/2\rceil})\log^{O(1)}n)$ time~\cite{chan97primal}. 
  Even when $f = \Theta(n)$, the running time is $\poly(n)$ per face.
  %

  Better bounds are known for $3$- and $4$-dimensional Voronoi diagrams where truly polylogarithmic time per face algorithms are known.
  Chan et al.\ gave an algorithm that achieves $O(f\log^2 n)$ for $\R^3$~\cite{chan97primal}.
  Amato and Ramos gave an $O(f\log^3 n)$-time algorithm in $\R^4$~\cite{amato96computing}.

  Voronoi diagrams and Delaunay triangulations are used in mesh generation (see the recent book by Chan et al.~\cite{cheng12delaunay}).
  Extra vertices called Steiner points are added in a way that keeps the complexity down.
  Perhaps surprisingly, the number of vertices increases, but the total number of faces can decrease.
  Such a mesh can be constructed in $O(n \log \spread)$ time using only $O(n \log \spread)$ vertices~\cite{hudson06sparse}.
  This was later improved to $O(n\log n)$ time by a more complicated algorithm~\cite{miller11beating}, but we do not see how to use this fact for our application.
  
  In this paper, we propose a new algorithm for constructing Voronoi diagrams that uses Voronoi refinement mesh generation as a preprocess.
  Then, it removes all of the Steiner points using a method derived from the field of kinetic data structures.
  We prove that at most $O(f \log \spread)$ local changes occur during the removal process.
  Each local change requires only constant time to process.
  These local changes are ordered via a heap data structure which adds an extra factor of $O(\log(f \log \spread)) = O(\log n + \log\log \spread)$ to the running time.  
  We assume an asymptotic floating point model of computation where points are represented by their coordinates with $O(\log n)$-bit floating point numbers~\cite{har-peled06fast}.
  In such a model, $\log\log \spread = O(\log n)$.
  Thus, the total running time is $O(f\log n \log \spread)$.

  Unlike previous work on output-sensitive Voronoi diagram construction, our algorithm does not use a reduction to the convex hull problem.
  Instead, it uses specific properties of the Voronoi diagram to get an improvement.

  \paragraph{Contribution} 
  
    We present the \MeshVoronoi algorithm, a new, output-sensitive algorithm for computing Voronoi diagrams in $\R^d$.
    \MeshVoronoi runs in $O(f \log n \log \spread)$ time.
    When $\spread = \poly(n)$ such as the case when input points have integer coordinates with $O(\log n)$ bits of precision, the running time is $O(f \log^2 n)$.
    Even the $O(n^{\lceil d/2\rceil})$ worst-case inputs for Voronoi diagrams can be represented with polynomial (even linear) spread.
    We get an improvement over existing algorithms for inputs with polynomial spread in dimension $d>3$ when $f = O(n^{2-2/\lceil d/2\rceil})$.
    Moreover, the analysis in terms of the spread is often quite loose. 
    The running time really depends on the aspect ratios of the individual Voronoi cells of the output, for which the spread stands in as an easy to describe upper bound. 
    The other advantage of the \MeshVoronoi algorithm is that it is simple both to describe and to analyze.
  

  \paragraph{Related Work} 

    We will make use of the flip-based construction of weighted Delaunay triangulations similar to that presented by Edelsbrunner and Shah~\cite{edelsbrunner96incremental}.
    In that paper, the concern was to add a single point to a regular triangulation but when run backwards, it describes the removal of a single point by local flips.  
    We are interested in removing sets of points simultaneously, the Steiner points of the mesh.
    In this respect, the problem more resembles the Delaunay triangulation splitting problem for which Chazelle et al.\ gave a linear time algorithm for the plane~\cite{chazelle02splitting}.
    The only higher dimensional analogue of this result is the extension by Chazelle and Mulzer to the case of splitting convex polytopes in $\R^3$~\cite{chazelle11computing}.
    
    Our algorithm may be viewed as a special case of a kinetic convex hull problem, and indeed, the main tools come directly from the literature on kinetic data structures~\cite{guibas04kinetic}.
    In general, the kinetic convex hull problem is much harder than the instances arising in our algorithm and it is only because of the specific geometric structure of these instances that we are able to prove useful bounds.

        
    Joswig and Ziegler presented a different approach to output sensitive convex hulls using homology calculations~\cite{joswig04convex}.
    This algorithm is shown to be output-sensitive for simplicial polytopes like those produced for Delaunay triangulations of points in general position.
    However, they do not improve bounds on the asymptotic running times in terms of $n$ and $f$ as their construction passes through several reductions.
    
    Many low-dimensional algorithms for computing Voronoi diagrams depend on incremental construction, where the points are added one at a time.
    When the input can be degenerate, Bremner showed that incremental constructions cannot be output-sensitive~\cite{bremner99incremental}.
    Our algorithm does resemble an incremental algorithm.
    The main difference is that it adds \emph{more} than just the input points in the first phase of the algorithm.
    The harder work is removing these extra points.
  

  \section{Background} 
\label{sec:background}

  \paragraph{Points and Distances in Euclidean Space} 
    We will deal exclusively with the case of points in $d$-dimensional Euclidean space.
    The Euclidean norm of $x$ is denoted $\|x\|$ and thus the Euclidean distance between two points $x$ and $y$ is $\|x-y\|$.    
    For a point $x\in \R^d$ and a compact set $U\subset\R^d$, define $\dist(x,U) := \min_{y\in U} \|x-y\|$.
    The \textbf{spread} $\spread$ of a set of points is the ratio of the largest to smallest pairwise distances.
    
    We make the simplifying assumption that the spread is at most exponential, i.e.\ $\spread = 2^{O(n)}$.
    This is equivalent to assuming an asymptotic floating point notation where coordinates are stored as floating point numbers with $O(\log n)$-bit words (see~\cite{har-peled06fast} for a more detailed treatment of this model). 
    In the case of exponential or super-exponential spread, several previous methods are known to beat the $O(f\log n\log \spread)$ running time of our algorithm.
    Moreover, it is possible to simulate a $2^{O(n)}$ upper bound on the spread by decomposing the point set into a hierarchy of $O(n)$ point sets, each with such a bound on the spread (see Miller et al.~\cite{miller11beating,miller13fast} for an application of this approach to mesh generation).
    The assumption that the spread is at most exponential allows a clearer exposition of the main ideas of our approach without introducing the complexity of hierarchical point sets.
    
  
  \paragraph{Voronoi Diagrams and Power Diagrams} 
    Let $P\subset \R^d$ be a finite set of points.
    The \textbf{Voronoi diagram} of $P$ is a cell complex decomposing $\R^d$ into convex, polyhedral cells such that all points in a cell share a common set of nearest neighbors among the points of $P$.
    Formally, for a subset $S\subset P$, the Voronoi cell $\vor_P(S)$ is defined as the set of all points $x$ of $\R^d$ such that $\dist(x,P) = \|x-y\|$ for all $y\in S$, i.e.\ 
    \[
      \vor_P(S) := \{x\in \R^d: \dist(x,P) = \|x-y\| \text{ for all } y\in S\}
    \]
    When $S = \{v\}$ is a singleton, we will abuse notation and refer to its Voronoi cell as $\vor_P(v)$ instead of $\vor_P(\{v\})$.
    For points in \textbf{general position} (no $k+3$ points lying on a common $k$-sphere), the affine dimension of $\vor_P(S)$ is $d-(|S|-1)$.

    \begin{figure*}[htbp]
      \centering
        \includegraphics[width=0.24\textwidth]{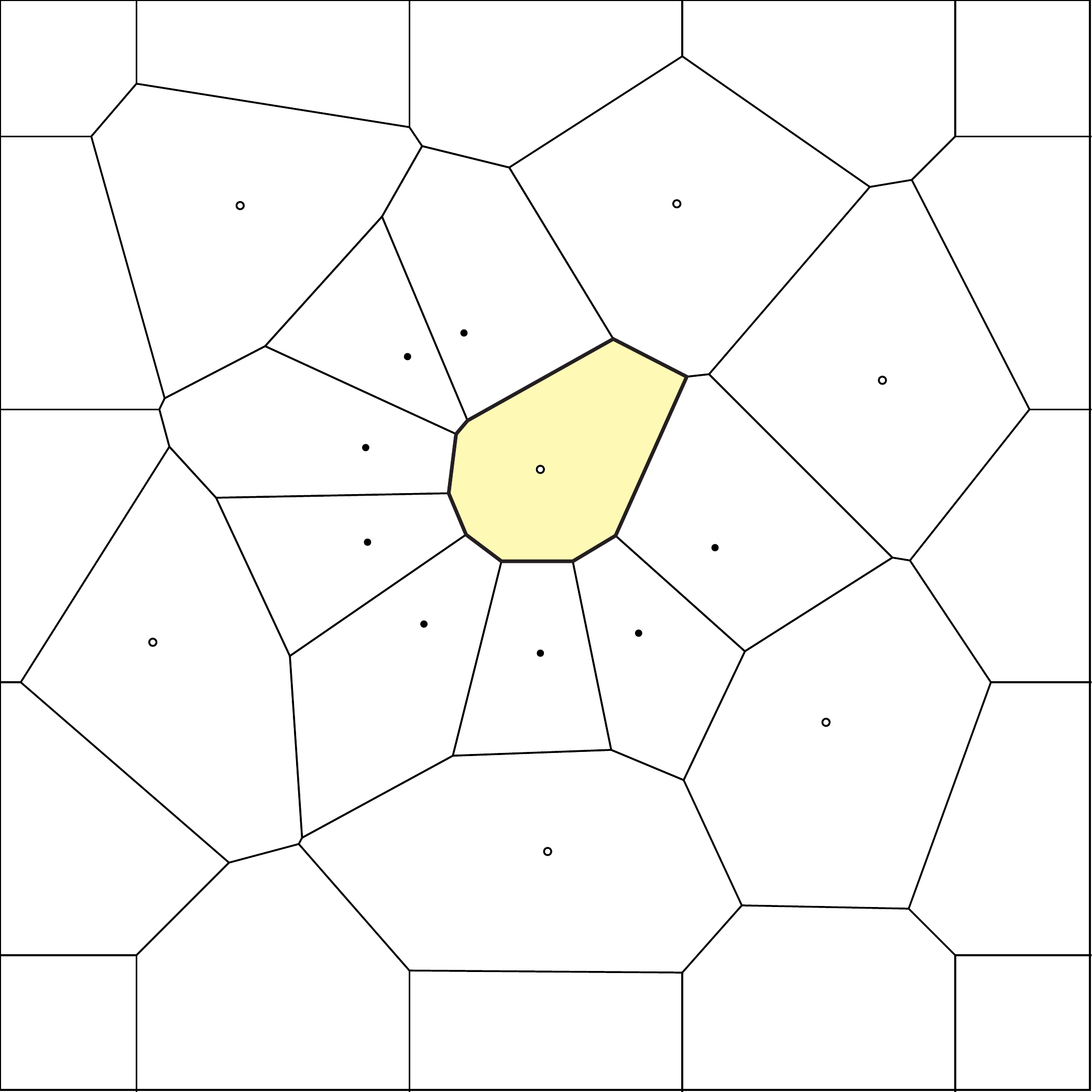}
        \includegraphics[width=0.24\textwidth]{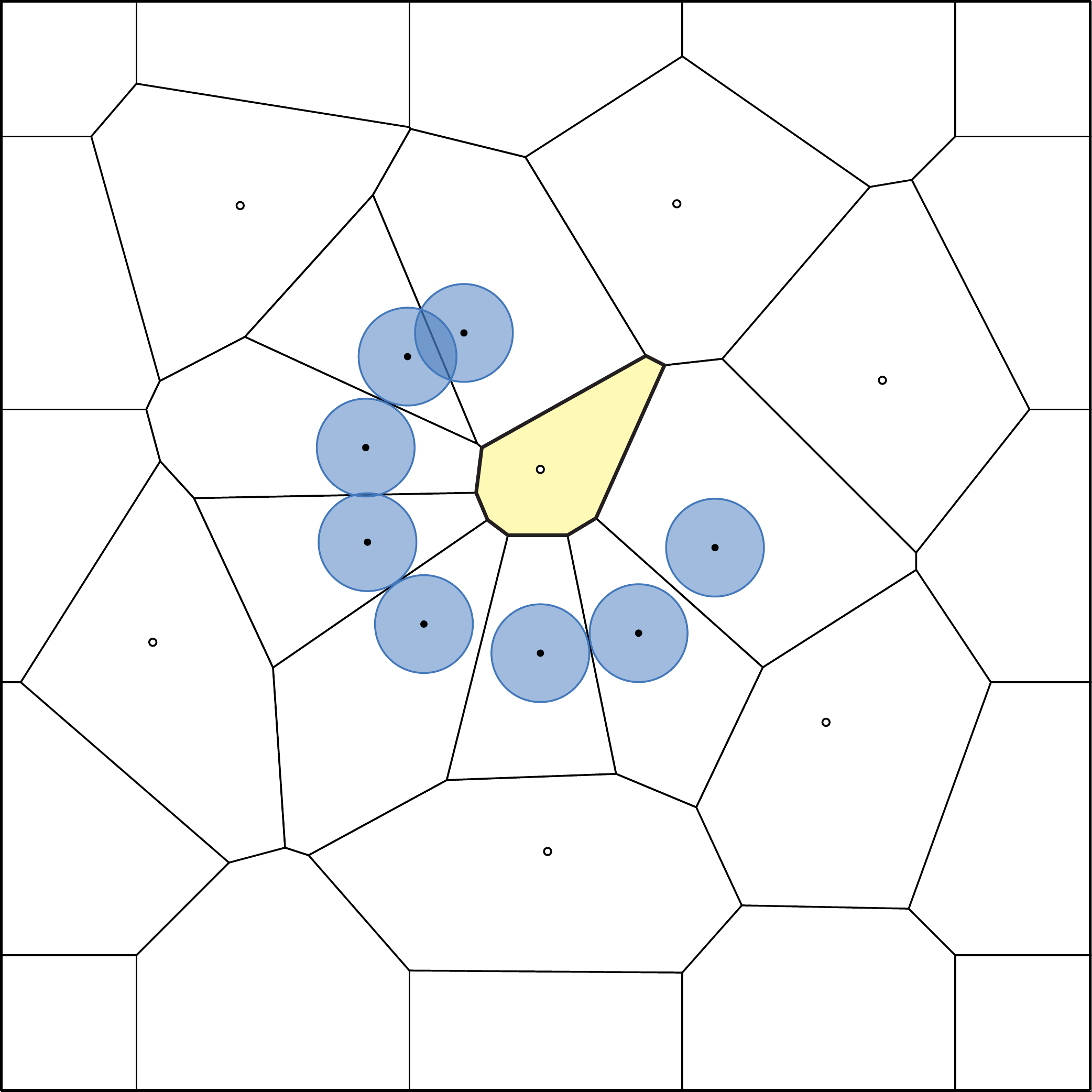}
        \includegraphics[width=0.24\textwidth]{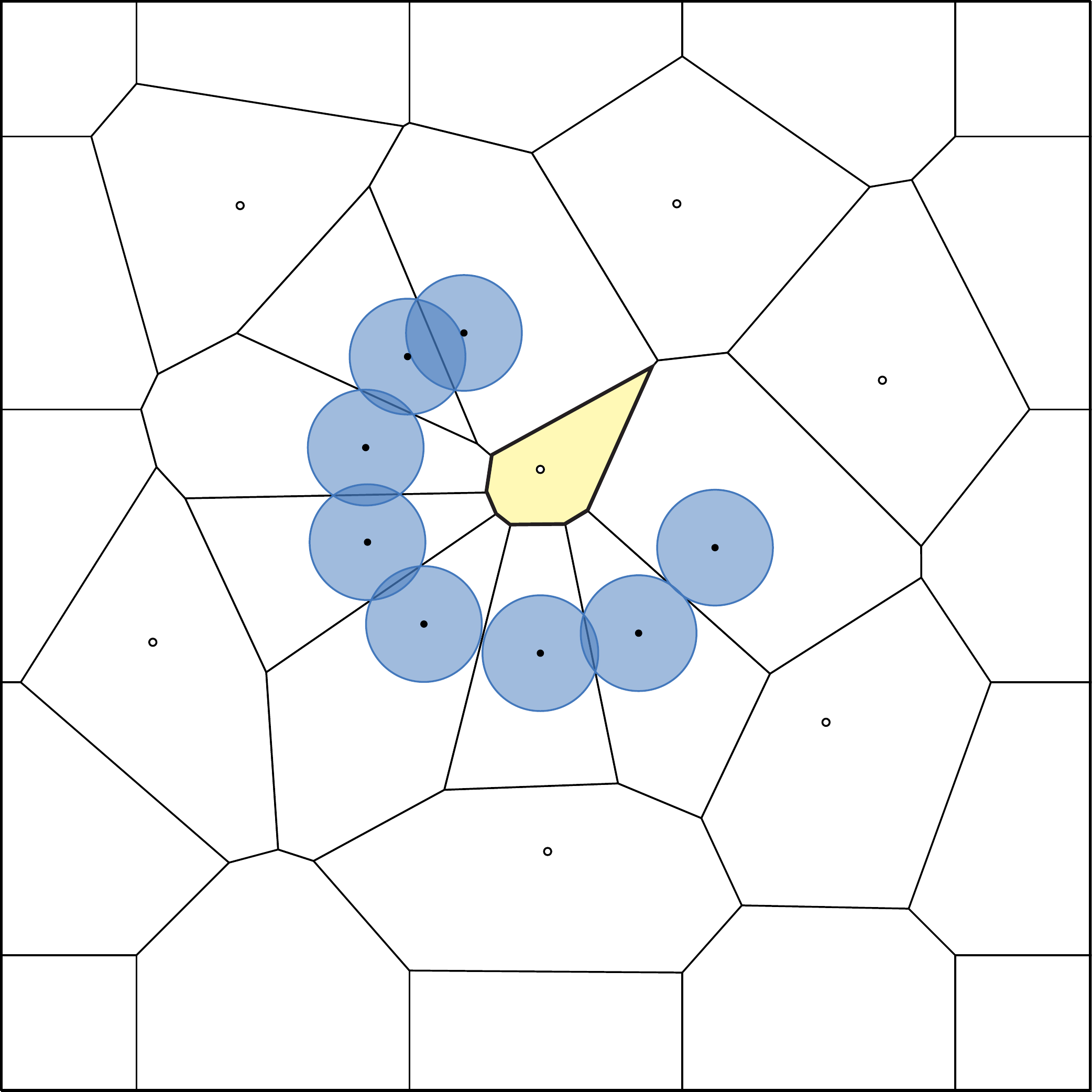}
        \includegraphics[width=0.24\textwidth]{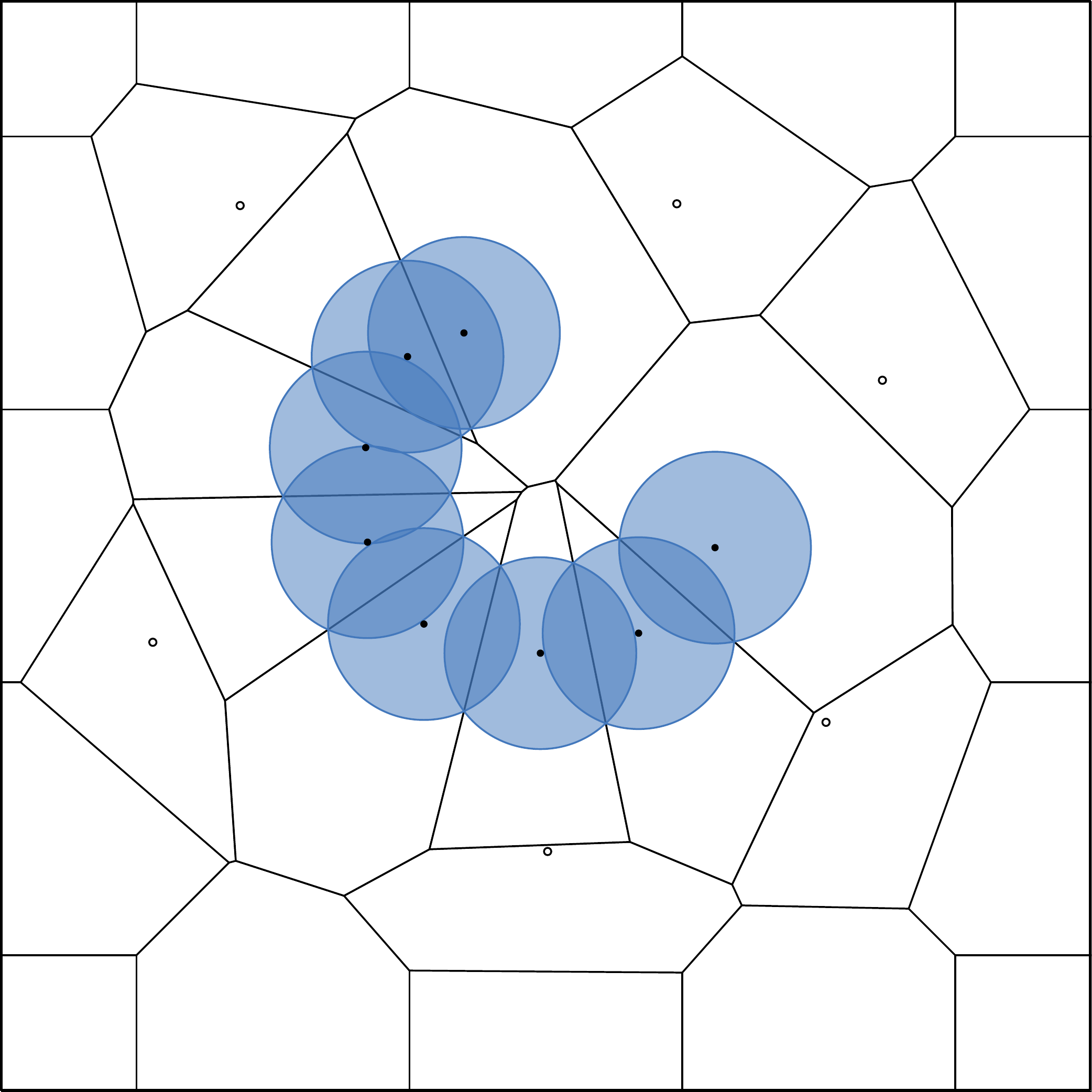}
      \caption{
        A weighted Voronoi diagram with weights indicated by disks.
        As the weights of some points increase from left to right, some cells disappear from the diagram altogether.
      }
      \label{fig:vornoi_delaunay}
    \end{figure*}

    The Voronoi diagram of $P$, denoted $\vor_P$, has a natural dual called the \textbf{Delaunay diagram}, denoted $\del_P$.
    For point sets in general position, the Delaunay diagram is an embedded simplicial complex called the \textbf{Delaunay triangulation}.
    The Voronoi/Delaunay duality is realized combinatorially by inverting the posets of the corresponding cell complexes, identifying each $k$-face of the Voronoi diagram with a $(d-k)$-simplex of the Delaunay triangulation.
    
    If the points $P$ have real-valued weights $w:P\to \R$ and the Euclidean distance from $p\in P$ to $x\in \R^d$ is replaced with the power distance $\pi_p(x) = \|x-p\|^2 - w(p)^2$, then the Voronoi diagram becomes a \textbf{weighted Voronoi diagram}, also known as a \textbf{power diagram}~\cite{edelsbrunner01geometry}.
    The dual is still well-defined and is known as the \textbf{weighted Delaunay diagram} (or the \textbf{weighted Delaunay triangulation} when it is a simplicial complex).
    
    Weighted Delaunay diagrams are the projections of convex polytopes in one dimension higher.
    In fact, interpreting the power distance to the origin as a height function for the points lifted into $\R^{d+1}$, gives the weighted Delaunay diagram as the projection of the lower convex hull of these lifted points.
    In particular, setting all weights to $0$ gives the (unweighted) Delaunay diagram as a convex hull in $\R^{d+1}$.
    Thus, there is a strong connection between the problems of computing convex hulls, Delaunay triangulations, and Voronoi diagrams.
  

  \paragraph{Orthoballs and Encroachment} 
  \label{par:orthoballs_and_encroachment}
  
    A weighted point $p$ \textbf{encroaches} $B = \ball(c,r)$ if 
    \[
      \pi_p(c) < r^2,
    \]
    and it is \textbf{orthogonal} to $B$ if 
    \[
      \pi_p(c) = r^2.
    \]
    For a collection of $d+1$ weighted points in $\R^d$ (not all on a hyperplane), the \textbf{orthoball} is the unique ball orthogonal to each of the weighted points, and its center and radius are called the \textbf{orthocenter} and \textbf{orthoradius} respectively.
    For unweighted points, a point encroaches a ball if its in the interior, and it is orthogonal if it is on the boundary.
    The orthocenter of unweighted points is called the \textbf{circumcenter} and its center and radius are called the \textbf{circumcenter} and \textbf{circumradius} respectively.
    
    If a weighted point encroaches the orthoball of a simplex $\sigma$ then we say that $\sigma$ is encroached.
    For points in general position, the weighted Delaunay triangulation is the unique triangulation in which no simplex is encroached by any weighted vertex.
    In the unweighted case, this corresponds to the property that no vertex lies in the interior of the circumball of any Delaunay simplex.
  

  \begin{figure*}[htbp]
    \centering
      \includegraphics[width=\textwidth]{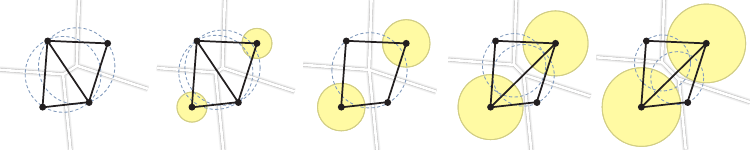}
    \caption{
      From left to right, increasing the weight of two points causes a flip in the weighted Delaunay triangulation.
      The dotted circles indicate the orthoballs of the weighted Delaunay triangles.
      The center figure indicates the exact weight when the flip occurs as the orthoballs of all four possible triangles coincide.
    }
    \label{fig:label}
  \end{figure*}

  \paragraph{Flips in Triangulations} 

    Bistellar flips are a useful way to make local changes in triangulations.
    This operation takes a collection of $d+2$ vertices $S$ whose convex hull $\ch(S)$ is a subcomplex of the triangulation, and replace its interior with a new triangulation.
    In $\R^2$, there are three classes of flips: those that swap the diagonal of a quadrilateral, those that introduce a new vertex in a triangle by splitting it into three, and those that remove a vertex incident to exactly three triangles.
    These are called respectively $(2,2)$-, $(1,3)$-, and $(3,1)$-flips, where the numbers correspond to the number of $d$-simplices removed and inserted respectively.
    In $d$ dimensions, there are similar $(i,j)$-flips for nonnegative integers $i,j$ such that $i+j = d+2$, where each replaces $i$ $d$-simplices with $j$ new $d$-simplices.

    Any pair of weighted Delaunay triangulations can be transformed, from one to the other by a sequence of bistellar flips.
    This process of local changes is at the heart of incremental constructions~\cite{edelsbrunner96incremental}.
    Any continuous change in the weights of a set of points causes a change in the corresponding weighted Delaunay triangulation that can be realized by a sequence of bistellar flips.
    In such a case, the flips happen exactly at those moments when the diagram is no longer a triangulation.
    For unweighted points, this happens when some set of $d+2$ points lie on a common $(d-1)$-sphere.
    This can be tested by a linear predicate.
    That is, $d+2$ points $p_1,\ldots,p_{d+2}$ lie on a $d$-sphere if and only if 
    \[
      \det\left[
        \begin{array}{ccc}
          p_1 & \cdots & p_{d+2} \\
          \|p_1\|^2 & \cdots & \|p_{d+2}\|^2 \\
          1 & \cdots & 1
        \end{array}
      \right]
      = 0.
    \]
    These represent the degenerate configurations of points.
    As before, for weighted points, we replace the norm with the power distance to find that $d+2$ weighted points form a degenerate configuration if and only if
    \[
      \det\left[
        \begin{array}{ccc}
          p_1 & \cdots & p_{d+2} \\
          \|p_1\|^2 - w(p_1)^2 & \cdots & \|p_{d+2}\|^2 - w(p_{d+2})^2\\
          1 & \cdots & 1
        \end{array}
      \right]
      = 0.
    \]
    This determinant test is a consequence of the lifting definition of the weighted Delaunay triangulation.
    

  \paragraph{Voronoi Aspect Ratios and Voronoi Refinement} 
  \label{par:voronoi_aspect_ratio}
    
    The \textbf{in-radius} of a Voronoi cell $\vor_P(q)$ is the radius of the largest ball centered at $q$ contained in $\vor_P(q)$.
    The \textbf{out-radius} of $\vor_P(q)$ is the radius of the smallest ball centered at $q$ that contains all of the vertices of $\vor_P(q)$. 
    Such a ball contains all of $\vor_P(q)$ for bounded Voronoi cells.
    The \textbf{aspect ratio} of the Voronoi cell of $q$ is the ratio the out-radius over the in-radius, denoted $\aspect_P(q)$.
    A Voronoi diagram has \textbf{bounded aspect ratio} if every cell has bounded aspect ratio.
    More generally, we say a set of points $M$ is \textbf{$\tau$-well-spaced} if for all $v\in M$, $\aspect_M(v)\le \tau$.
    
    Bounded aspect ratio Voronoi diagrams have many nice properties.
    The most relevant for our purposes is that no $d$-dimensional Voronoi cell has more than $2^{O(d)}$ facets (faces of codimension $1$)~\cite{miller99radius}.

    For any set of $n$ points $P$, there exists a $\tau$-well-spaced superset $M$ for any constant $\tau>2$.
    Moreover, such a superset can be computed in $O(n\log n + |M|)$ time with $|M| = O(n\log \spread)$~\cite{miller11beating}.
    The process of adding points to improve the Voronoi aspect ratio is called \textbf{Voronoi refinement}.
    It is perhaps more widely known in its dual form, Delaunay refinement.
    The extra points added are called \textbf{Steiner points}.

    The analysis of Voronoi refinement depends on a function called the \textbf{Ruppert feature size}, defined for all $x\in \R^d$ as the distance to the second nearest input point.
    \[
      \fs_P(x) := \max_{p\in P} \dist(x, P\setminus\{p\})
    \]
    One defines the feature size with respect to a set $M$ similarly and denote it $\fs_M$.
    Voronoi refinement produces a $\tau$-well-spaced set of points $M$ such that for each vertex $v\in M$,
    \[
      \fs_P(v) \le K \fs_M(v),
    \]
    where $K = \frac{2\tau}{\tau-2}$ (see~\cite{hudson06sparse} or \cite[Thm.~3.3.2]{sheehy11mesh}).
    
    The total number of output points is, up to constant factors, determined by the \textbf{feature size measure} of the domain $\Omega$, defined as
    \[
      \mu(\Omega) := \int_\Omega \frac{dx}{\fs_P(x)}.
    \]
    For a wide class of input domains, the feature size measure is $O(n)$~\cite{sheehy12new}.
    For general inputs, the bound of $O(n \log \spread)$ mentioned above is well known and can even be derived as a corollary to Lemma~\ref{lem:packing} below.
    
    
  \paragraph{Sparse Voronoi Refinement} 

    The meshing preprocess that we use is called Sparse Voronoi Refinement (SVR) and is due to Hudson et al.~\cite{hudson06sparse}. 
    SVR is able to avoid the worst case complexity of Voronoi diagrams because it guarantees that the intermediate state is always a well-spaced point set and thus has a Voronoi diagram of linear complexity.

    For the case of point sets in a bounding box, the SVR algorithm is easy to describe.
    It is an incremental construction that proceeds by alternating between two phases called \textbf{break} and \textbf{clean}.
    The break phase attempts to add a Steiner point $q$ at the farthest vertex of a Voronoi cell that contains an input point that has not yet been inserted.
    If there is an input point $p$ too near to $q$, then $p$ is added instead.
    The clean phase repeatedly attempts to add the farthest vertex of any cell with aspect ratio greater than $\tau$ until none are left.
    As in the break phase, if ever there is an input point too close, then it is added instead.
    
    The running time of SVR is $O(n \log \spread +|M|)$.
    The $n\log \spread$ term comes from the point location data structure which associates each point with the Voronoi cell that contains it.
    When attempting to add a point, the nearby Voronoi cells are checked to see if any input points are nearby to insert instead.
    
    Acar et al.\ developed an efficient implementation of SVR in $3$-d~\cite{acar07svr}.
    It can also be efficiently parallelized~\cite{hudson07sparse}.
    Recently, Miller et al.\ showed that a variation of SVR runs in $O(n \log n + |M|)$ time by using a more complex point location scheme~\cite{miller11beating}.


  %
  %
  %
  %

  \section{Algorithm} 
\label{sec:algorithm}

  In this section, we describe the \MeshVoronoi algorithm.
  It has three phases: a preprocess where the input points are placed in a bounding box and meshed, a removal phase that eliminates most of the Steiner points by incremental flipping, and a cleanup phase that removes the outer bounding box.
  The main data structure is a heap that stores the facets that are scheduled for removal by flipping.
  We call this the \textbf{flip heap}.

  Throughout the algorithm, there is a global time parameter $t$.
  There will be a superset $M$ of the input points $P$.
  At time $t$, $M_t$ denotes the set $M$ with weights, where the weight of a point $p$ at time $t$ is given as
  \[
    w(p,t) := \left\{
      \begin{array}{ll}
        \sqrt{t} & \text{if $p\in P$}\\
        0 & \text{otherwise}
      \end{array}
    \right.
  \]
  Using this weighting scheme, there exists a sufficiently large $t$ such that for all $t'> t$, $\del_{M_t} = \del_{M_{t'}}$.
  We use $M_\star$ to refer to $M_t$, where $t$ is some such sufficiently large value.

  \begin{figure*}[htbp]
    \centering
      \includegraphics[width = 0.24\textwidth]{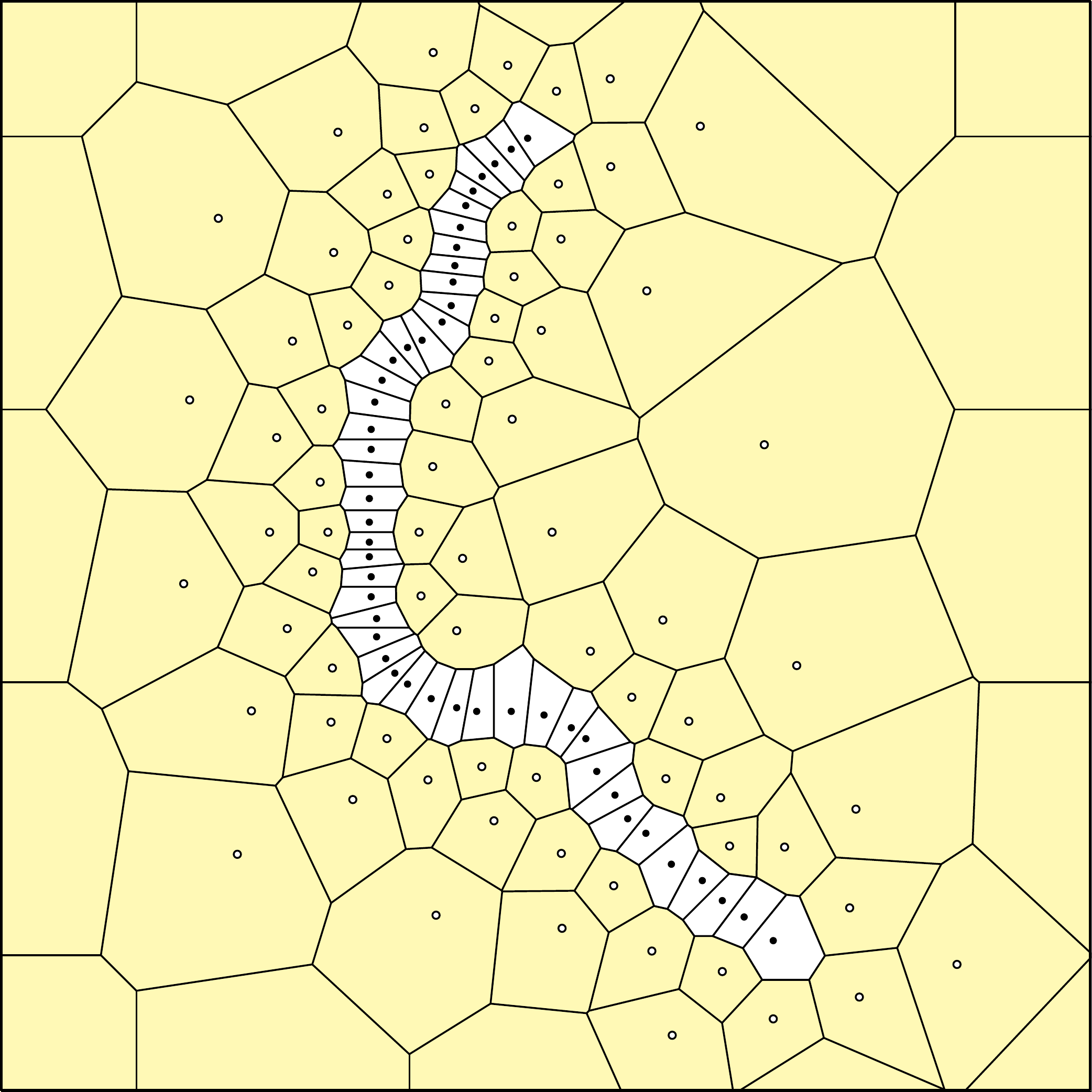}
      \includegraphics[width = 0.24\textwidth]{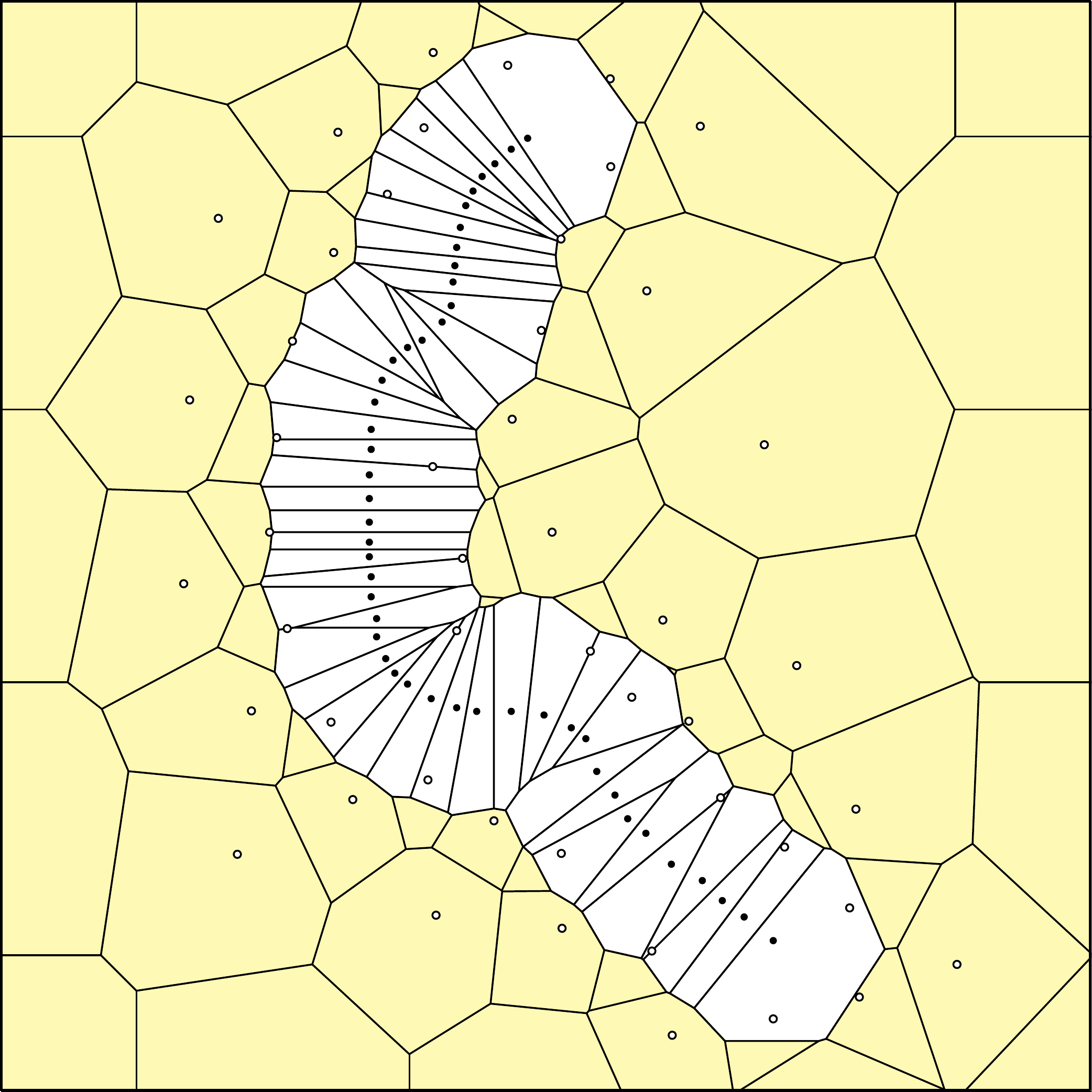}
      \includegraphics[width = 0.24\textwidth]{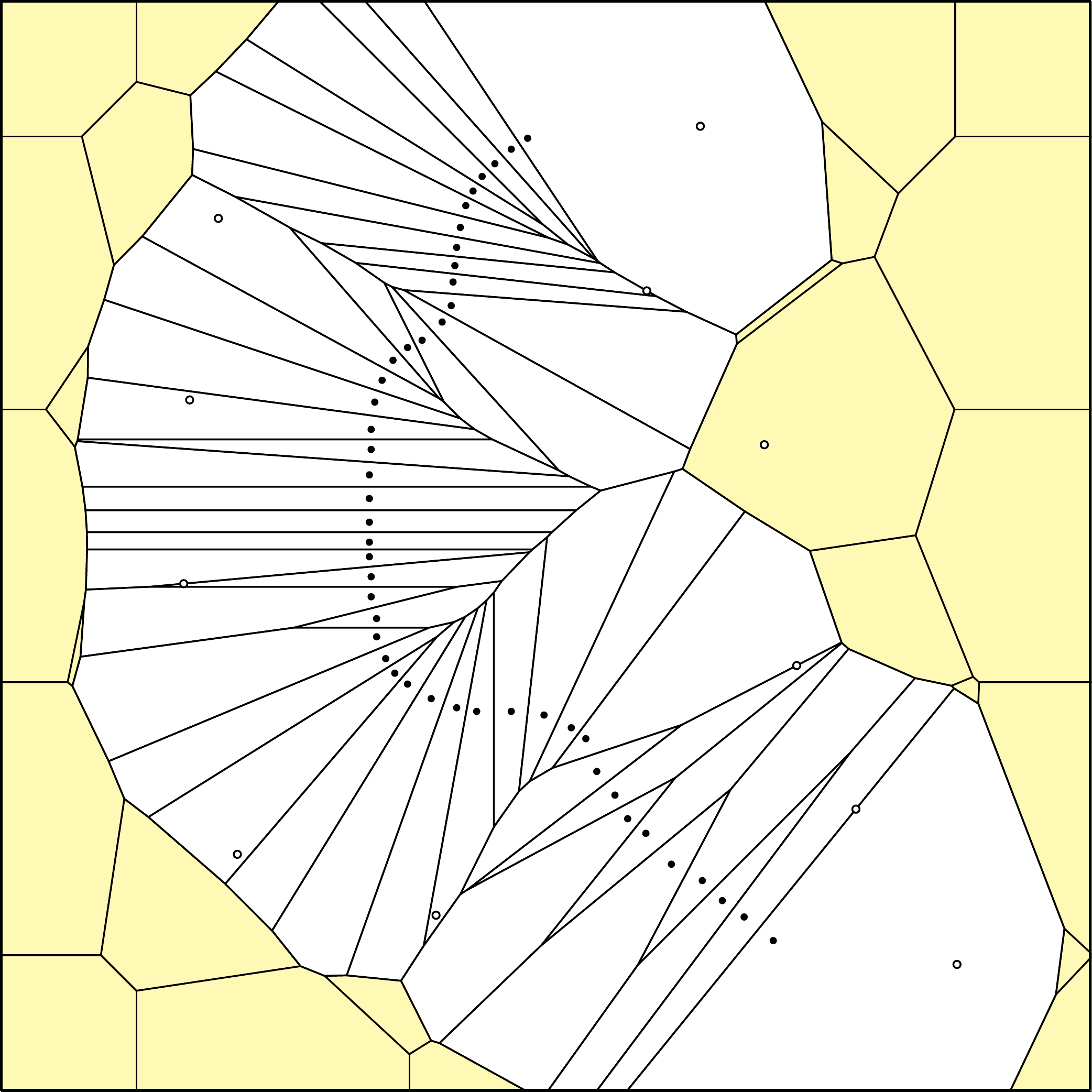}
      \includegraphics[width = 0.24\textwidth]{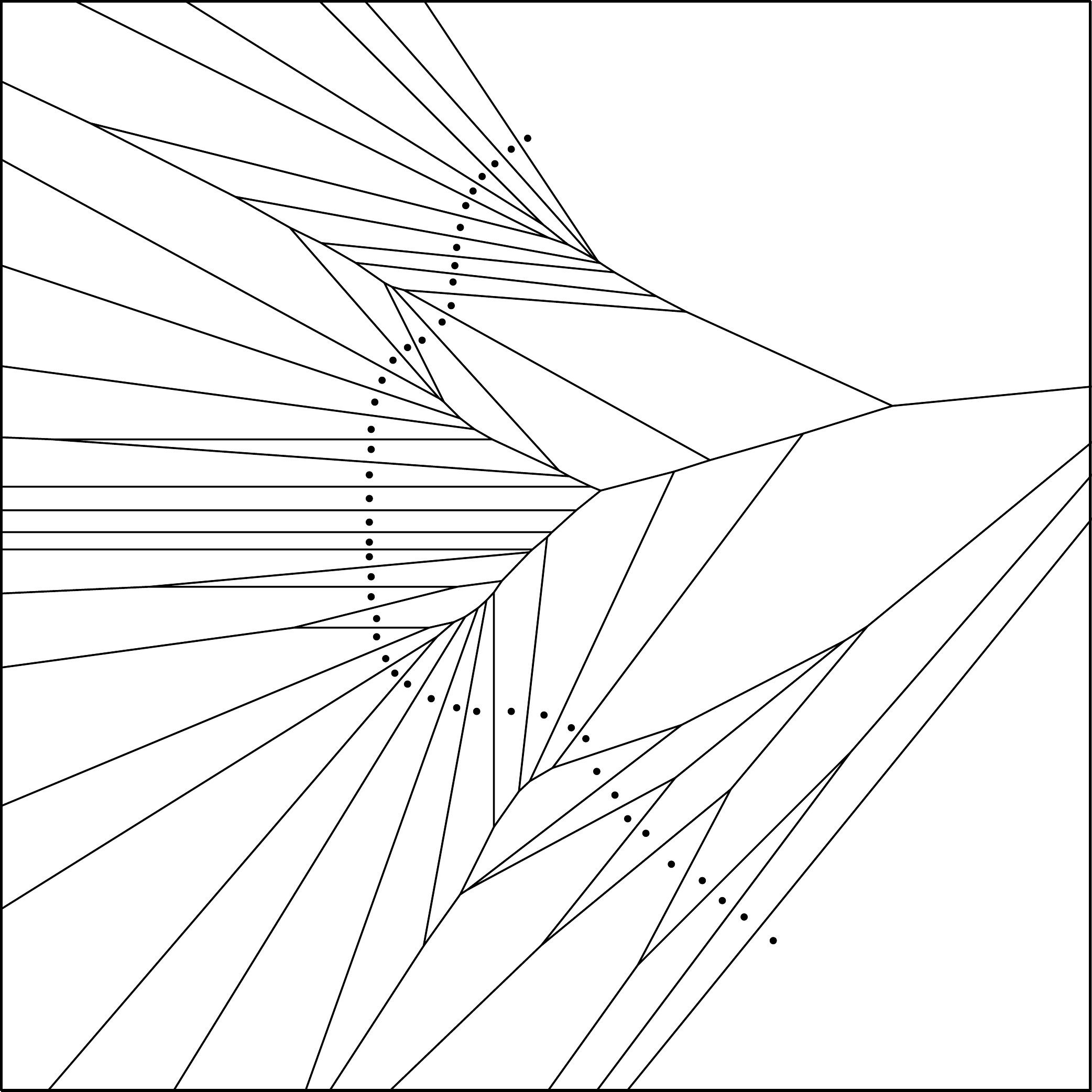}
    \caption{
      An illustration of the algorithm from left to right.  
      Starting with a point set, it is extended to a well-spaced superset.
      The cells of the Steiner points are shaded.
      Then, the weights of the input points are increased until the extra cells disappear.
    }
    \label{fig:algorithm}
  \end{figure*}

  \subsection{Checking Potential Flips} 
  \label{sub:checking_potential_flips}
    
    A set $S$ of $d+2$ points forms a \textbf{potential flip} at time $t$ if $\ch(S)\subset \del_{M_t}$.
    A potential flip can be identified with any of its interior facets.
    For example, in the plane, $(2,2)$-flips are usually associated with the edge in the interior of a convex quadrilateral that will get replaced during the flip.
    In general, the facet representing a potential flip stands for the $d+2$ points comprising the two simplices sharing that facet.
    
    The weights of the points of $M$ vary with the time parameter $t$.
    For $p\in M$, the power distance from $x\in\R^d$ to $p$ at time $t$ is given as
    \[
      \pi_p(x,t) := \|p-x\|^2 - w(p,t)^2 = \left\{
        \begin{array}{ll}
          \|p-x\|^2 - t & \text{if $p\in P$}\\
          \|p-x\|^2 & \text{otherwise}
        \end{array}
      \right.
    \]
    So, for $d+2$ points $p_1,\ldots, p_{d+2}$ comprising a potential flip, the time associated with the flip is the value of $t$ such that
    \[
      \det\left[
        \begin{array}{ccc}
          p_1 & \cdots & p_{d+2} \\
          \pi_{p_{1}}(0,t) & \cdots & \pi_{p_{d+2}}(0,t) \\
          1 & \cdots & 1
        \end{array}
      \right]
      = 0.
    \]

    The determinant on the left hand side is a linear function of $t$, so it is easy to solve for $t$.
    When evaluating a potential flip, the algorithm simply checks if the time associated with the flip is before or after the current time.
    This approach to viewing linear, geometric predicates as polynomial functions of time is well established in the area of kinetic data structures~\cite{guibas04kinetic}.
    In our case, only one row of the matrix is changing with time, and the change is linear, so there is no need to solve higher degree polynomial systems as is common in other kinetic data structures problems. 
        

  \subsection{Preprocessing} 
  \label{sub:preprocessing}

    The first step in the algorithm is to add a constant number of points to form a bounding region around the input.
    This can be done so that the total complexity of the convex hull of the augmented point set is a constant.
    Second, the Sparse Voronoi Refinement algorithm adds $O(n\log \spread)$ Steiner points to produce a $\tau$-well-spaced superset $M$, for a constant $\tau> 2$ (choosing $\tau=3$ is reasonable).
    Third, the facets of the Delaunay triangulation of $M$ are each checked for a potential flip and added to the flip heap accordingly.
  

  \subsection{Flipping out Steiner points} 
  \label{sub:flipping_out_steiner_points}
  
    We maintain the weighted Delaunay triangulation through a sequence of incremental flips induced by the changes in weights.
    While the flip heap is nonempty, we pop the next potential flip.
    The time $t$ is set to be the time of this potential flip.
    If the flip is still valid, i.e.\ all of the relevant facets are still present in the weighted Delaunay triangulation at time $t$, then we perform the flip.
    Otherwise, we do nothing and continue.
    If any new facets are introduced, we check them for potential flips and add them to the flip heap accordingly.
    Then we loop.
    

  \subsection{Postprocessing} 
  \label{sub:postprocessing}
  
    To complete the construction, the algorithm removes all boundary vertices and all incident Delaunay simplices.
  
    
  \section{Analysis} 
\label{sec:analysis}

\subsection{Boundary Issues} 
\label{sub:boundary_issues}
  
    The weighting will not remove all of the Steiner points.
    However, as the following lemma shows, only the boundary vertices will remain.
  
    \begin{lemma}[Only boundary vertices remain]\label{lem:only_boundary_steiners_remain}
      If $q\in \del_{M_t}$ for some Steiner point $q$ and all $t\ge 0$, then $q$ is a boundary vertex.
    \end{lemma}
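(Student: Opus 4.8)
The plan is to prove the contrapositive: if a Steiner point $q$ is \emph{not} a boundary vertex --- that is, $q$ lies in the interior of $\ch(M)$ --- then $q\notin\del_{M_t}$ for all sufficiently large $t$, hence for some $t\ge 0$. I will work in the lifted picture recalled in Section~\ref{sec:background}: the weighted Delaunay triangulation $\del_{M_t}$ is the vertical projection of the lower convex hull of the lifted points $\hat M_t := \{(m,\pi_m(0,t)) : m\in M\}\subset\R^{d+1}$, and a point $m$ is a vertex of $\del_{M_t}$ if and only if its lift $\hat m_t$ lies on that lower hull. Since $q$ is a Steiner point, $\hat q_t=(q,\|q\|^2)$ does not depend on $t$. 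Thus it suffices to show that for all large $t$ there is a point of $\ch(\hat M_t)$ lying strictly below $\hat q_t$ on the vertical line through $q$; this forces $\hat q_t$ off the lower hull, and hence $q$ out of $\del_{M_t}$.

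The geometric heart of the argument is to produce a convex combination of $M\setminus\{q\}$ equal to $q$ that places positive weight on an input point. Fix any $p\in P$ (we may assume $P\neq\emptyset$, otherwise there is nothing to prove). Since $q$ is interior to the full-dimensional body $\ch(M)$, the ray from $p$ through $q$ leaves $\ch(M)$ at a boundary point $y$, with $q$ strictly between $p$ and $y$; write $q=\alpha p+(1-\alpha)y$ with $\alpha\in(0,1)$. Because $q$ is interior it is not a vertex of $\ch(M)$, so $\ch(M\setminus\{q\})=\ch(M)\ni y$, and we may expand $y=\sum_{m\in M\setminus\{q\}}\mu_m m$ as a convex combination avoiding $q$. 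Substituting gives $q=\sum_{m\in M\setminus\{q\}}\lambda_m m$, a convex combination whose coefficient on $p$ is at least $\alpha>0$.

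Finally, lift this identity: the point $z_t:=\sum_{m\in M\setminus\{q\}}\lambda_m\hat m_t$ lies in $\ch(\hat M_t)$ and projects vertically onto $q$. Its height is $\sum_m\lambda_m\pi_m(0,t)\le\sum_m\lambda_m\|m\|^2-\lambda_p\,t$, using that weights are nonnegative (so $\pi_m(0,t)\le\|m\|^2$ for every $m$) and that $\pi_p(0,t)=\|p\|^2-t$. Since $\lambda_p>0$, this height tends to $-\infty$, so for $t$ large enough it is strictly below $\|q\|^2=\pi_q(0,t)$, i.e. strictly below $\hat q_t$. By the reduction above, $q\notin\del_{M_t}$ for all such $t$, contradicting $q\in\del_{M_t}$ for all $t\ge 0$. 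I expect the only delicate point to be the second step --- extracting a representation of an interior point with prescribed positive weight on $p$; the ray-shooting argument settles it, but it genuinely needs $q$ to be interior (not merely a non-vertex lying on a boundary face), and one must check that $q$ is distinct from both $p$ and $y$ so that $\alpha$ is a proper barycentric weight in $(0,1)$. Everything else is routine manipulation of the lifting correspondence.
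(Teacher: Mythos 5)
Your proof is correct, but it travels a genuinely different route from the paper's. The paper stays entirely in the power-diagram picture: for an interior Steiner point $q$, the cell $\vor_M(q)$ is compact, so $t_\star=\max_{x\in\vor_M(q)}\|p-x\|^2$ is finite for any input point $p$; for $t>t_\star$ and any $y\in\vor_{M_t}(q)\subseteq\vor_M(q)$ one gets $\pi_p(y,t)\le t_\star-t<0\le\pi_q(y,t)$, so the weighted cell of $q$ must be empty. You instead work in the lifted convex-hull picture and produce a \emph{certificate}: a convex combination $q=\sum_{m\in M\setminus\{q\}}\lambda_m m$ with $\lambda_p>0$ for some input point $p$ (obtained by shooting the ray from $p$ through the interior point $q$ to the boundary of $\ch(M)$), so that the lifted combination drops below $\hat q_t$ as $t\to\infty$. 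Both arguments are sound; they are roughly dual to one another (a witness point in the cell versus a convex-combination certificate above it). The paper's version is shorter and gives an explicit threshold $t_\star$ in the algorithm's own vocabulary of power distances, needing only the (easily checked) monotonicity $\vor_{M_t}(q)\subseteq\vor_M(q)$ and compactness of interior cells. Your version isolates exactly which hypothesis is used --- that $q$ lies in the interior of $\ch(M)$, hence is a convex combination of the others with positive weight on an input point --- and would transfer verbatim to any weight schedule that sends the input points' lifted heights to $-\infty$; the price is the ray-shooting step, whose side conditions ($p\neq q$, $q\neq y$, $\alpha\in(0,1)$) you correctly flag and discharge.
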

    \begin{proof}
      Let $p$ be any input point and let $q$ be any non-boundary Steiner point.
      Let 
      \[
        t_\star = \max_{x\in \vor_M(q)}\|p-x\|^2.
      \]
      Such a maximum exists because the Voronoi cells of non-boundary vertices are compact.
      Suppose for contradiction that $q\in \del_{M_t}$ for some $t > t_\star$.
      Since $q\in \del_{M_t}$, there must exist some point $y\in\vor_{M_t}(q)$.
      It follows that $y\in\vor_M(q)$ as well and so $\|p-y\|^2\le t_\star$.
      We now observe that
      \[
        \pi_{p}(y,t) = \|p-y\|^2 - t \le t_\star - t < 0 \le \|q-y\|^2 = \pi_{q}(y,t), 
      \]
      contradicting the assumption that $y\in \vor_{M_t}(q)$.
    \end{proof}
  
    We need to show that removing the boundary vertices in a naive way as a post-process gives the correct output.
    For this, it will suffice to show the following lemma.
    
    \begin{lemma}[Induced Subcomplex]\label{lem:del_P_is_an_induced_subcomplex}
      There exists $t_\star$ such that for  all $t> t_\star$, $\del_P$ is an induced subcomplex of $\del_{M_t}$.
    \end{lemma}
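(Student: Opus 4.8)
The plan is to show that for all sufficiently large $t$, the simplices of $\del_{M_t}$ all of whose vertices lie in $P$ are \emph{exactly} the simplices of $\del_P$; this is what ``induced subcomplex'' means. I would prove the two inclusions separately, and only the first one will force a lower bound on $t_\star$.

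\emph{Every simplex of $\del_P$ is present for large $t$.} Since $\del_{M_t}$ is a simplicial complex, hence closed under taking faces, and (for $P$ affinely spanning $\R^d$; the degenerate cases are trivial) every simplex of $\del_P$ is a face of some $d$-simplex of $\del_P$, it suffices to fix a $d$-simplex $\sigma\in\del_P$ with circumcenter $c$ and circumradius $r$ and show $\sigma\in\del_{M_t}$. Because every vertex of $\sigma$ lies in $P$ and so carries weight $\sqrt t$ in $M_t$, the orthocenter of $\sigma$ is again $c$: the defining relation $\pi_p(c,t)=\rho^2$ for all $p\in\sigma$ reads $\|p-c\|^2-t=\rho^2$, solved by the Euclidean circumcenter with $\rho^2=r^2-t$. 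The orthoradius squared is negative once $t>r^2$, but that is harmless; I only need that the orthoball is not encroached, i.e.\ $\pi_v(c,t)\ge r^2-t$ for every $v\in M$. For $v\in P\setminus\sigma$ this says $\|v-c\|^2\ge r^2$, which holds because the circumball of $\sigma$ contains no input point. For every $v$ of weight $0$ (every Steiner point and every bounding vertex), $\pi_v(c,t)=\|v-c\|^2\ge 0>r^2-t$ as soon as $t>r^2$. Hence for $t>r^2$ the orthoball of $\sigma$ is strictly non-encroached by all of $M$, so $\sigma$ is a $d$-simplex of $\del_{M_t}$ by the orthoball/encroachment characterization of the weighted Delaunay complex recalled in Section~\ref{sec:background}. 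Taking $t_\star$ to be the finite maximum of $r^2$ over all $d$-simplices $\sigma$ of $\del_P$ handles every top simplex at once, hence all of $\del_P$.

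\emph{No further simplices are present.} Conversely, let $\sigma$ be a simplex of $\del_{M_t}$ all of whose vertices lie in $P$. In the lifted picture $\del_{M_t}$ is the projection of the lower convex hull of the points $\widehat v:=(v,\pi_v(0,t))$, and $\sigma$ is the image of a bounded face of that hull, so there is a non-vertical supporting hyperplane $h=\{x_{d+1}=\langle a,x\rangle+b\}$ exposing exactly this face: $\pi_v(0,t)\ge\langle a,v\rangle+b$ for all $v\in M$, with equality precisely on the vertices of $\sigma$. For input points $\pi_v(0,t)=\|v\|^2-t$, so adding $t$ to both sides shows that $h'=\{x_{d+1}=\langle a,x\rangle+(b+t)\}$ satisfies $\|v\|^2\ge\langle a,v\rangle+(b+t)$ for all $v\in P$, with equality exactly on the vertices of $\sigma$. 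Thus $h'$ is a supporting hyperplane of the lower hull of the parabolic lifts of $P$ exposing precisely $\sigma$, i.e.\ $\sigma\in\del_P$.

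Combining the inclusions, for every $t>t_\star$ the subcomplex of $\del_{M_t}$ induced on the vertex set $P$ equals $\del_P$. The step I expect to require the most care is the converse inclusion: one must choose the supporting hyperplane to expose $\sigma$ and nothing more, so that after the vertical shift by $t$ no vertex of $P$ outside $\sigma$ accidentally lands on $h'$; this is where the general position assumption on $P$ enters, and it also ensures that a face of $\del_{M_t}$ with all vertices in $P$ is genuinely a simplex (more than $d+1$ cocircular points of $P$ being excluded). Everything else is bookkeeping. Note that $t_\star$ is controlled purely by the circumradii of the cells of $\del_P$ and does not otherwise depend on the Steiner points.
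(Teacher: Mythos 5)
Your proof is correct, and it splits the same way the paper's does: a containment direction that forces the lower bound on $t_\star$, and an inducedness direction that holds for all $t$. The containment direction is essentially the paper's argument — the circumball of a simplex of $\del_P$ remains its orthoball under the uniform weighting, and once $t>r^2$ its (negative) squared orthoradius lies below $\pi_v(c,t)$ for every $v\in M$, so nothing encroaches. Your version is slightly more careful in two places: you check non-encroachment by the other \emph{input} points as well (the paper jumps straight to ``some Steiner point $q$ encroaches''), and you reduce to top-dimensional simplices of $\del_P$, which sidesteps the ambiguity of what ``the'' circumball of a lower-dimensional simplex means. Where you genuinely diverge is the inducedness direction: the paper argues that $\del_P$ and $\del_{M_t}$ are both embedded triangulations covering the convex closure of $P$, so no extra simplex spanned by $P$ can fit; you instead take the supporting hyperplane of the lifted face of $\del_{M_t}$ and translate it vertically by $t$ to exhibit the simplex directly as a face of the lower hull of the parabolic lift of $P$. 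Your route is a purely local, algebraic certificate that does not lean on the global geometric fact that the complexes are embedded and cover a common region (and it transparently uses the containment $\mathrm{vert}(\sigma)\subseteq P$ to convert one supporting hyperplane into the other); the paper's route is shorter but imports that embedding fact. Both are sound, and your closing caveat about choosing the hyperplane to expose exactly $\sigma$ is precisely the point where general position is needed in either treatment.
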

    \begin{proof}
      For all $t\ge 0$, and each $p\in P$, we have $p\in \vor_{M_t}(p)$.
      That is, input points are always their own nearest neighbors as weights increase.
      This implies that the vertices of $\del_P$ are all present in $\del_{M_t}$.

      Suppose for contradiction that some simplex $\sigma$ is in $\del_P\setminus\del_{M_t}$ for some $t > r^2$, where $r$ is the circumradius of $\sigma$.
      Then there must be some Steiner point $q$ encroaching the orthoball of $\sigma$ at time $t$.
      Let $x$ be the circumcenter of $\sigma$.
      Since $\sigma$ is composed only of input points, $x$ is also the orthocenter of $\sigma$ for all times $t$.
      For all $p\in \sigma$, we have 
      \[
        \pi_p(x,t) = \|p-x\|^2 - t = r^2-t < 0 \le \|q-x\|^2 = \pi_q(x,t).
      \]
      It follows that $q$ does not encroach the orthoball of $\sigma$ at time $t$.
      This contradiction implies that $\del_P\subset \del_{M_{t_\star}}$ as long as $t_\star$ is greater than the largest squared circumradius of any simplex in $\del_P$.
      
      To show that $\del_P$ is an induced subcomplex and complete the proof, we observe that because $\del_P$ and $\del_{M_t}$ are embedded simplicial complexes covering the convex closure of $P$, there can be no simplices in $\del_{M_t}$ containing only vertices of $P$ that are not already included in $\del_P$. 
    \end{proof}

    Finally, the last important fact to check is that there is not too much extra work to put the points in a bounding domain.
    In meshing, this slack between the bounding box and the input is called scaffolding~\cite{hudson09size}.
    
    \begin{lemma}[Bounded Scaffolding]\label{lem:not_too_much_scaffolding}
      The number of simplices removed in the final step of the algorithm is $O(f)$.
      That is, $|\del_{M_\star}\setminus \del_P| = O(f)$.
    \end{lemma}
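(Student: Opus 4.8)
The plan is to first pin down exactly which simplices disappear in the final step and then bound them cell by cell, using the rigidity forced by the weighting scheme. By Lemma~\ref{lem:del_P_is_an_induced_subcomplex}, $\del_P$ is an induced subcomplex of $\del_{M_\star}$, so a simplex of $\del_{M_\star}$ lies in $\del_P$ if and only if all of its vertices are input points; by Lemma~\ref{lem:only_boundary_steiners_remain} (applied at a time large enough to define $M_\star$), every non-input vertex of $\del_{M_\star}$ is a boundary vertex. Hence $\del_{M_\star}\setminus\del_P$ is exactly the set of simplices of $\del_{M_\star}$ incident to at least one boundary vertex. Write $B$ for the set of boundary vertices appearing in $\del_{M_\star}$.

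Next I would show $|B| = O(1)$. The bounding region is built from $O(1)$ points with constant-complexity convex hull, and it can be placed at scale $\Theta(\mathrm{diam}(P))$ around the input, so $\fs_P$ is $\Theta(\mathrm{diam}(P))$ on a neighborhood of $\partial\ch(M)$ while $\partial\ch(M)$ has bounded $(d-1)$-dimensional measure at that scale. A packing argument of the sort used for Lemma~\ref{lem:packing} then shows that Sparse Voronoi Refinement inserts only $O(1)$ Steiner points on $\partial\ch(M)$ (in the bounding-box formulation it inserts none), so the only vertices of $M_\star$ on $\partial\ch(M)$ are the $O(1)$ original bounding points. Since every boundary vertex lies on $\partial\ch(M)$, this gives $|B| = O(1)$.

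Because $d$ is constant and a simplex has at most $d+1$ vertices, it then suffices to bound, for a fixed $b\in B$, the number of simplices of $\del_{M_\star}$ incident to $b$ by $O(f)$. By weighted Voronoi--Delaunay duality this number equals the number of faces of the cell $\vor_{M_\star}(b)$, so the task is to bound the face complexity of that cell, and here the weighting is essential: all of $P$ carries the same weight $\sqrt{t}$, so the half-space an input point contributes to $\vor_{M_\star}(b)$ is $\|b-x\|^2 \le \dist(x,P)^2 - t$, governed at each $x$ by the nearest input point. As $t\to\infty$, $\tfrac1t\vor_{M_\star}(b)$ converges to the polyhedron obtained from the outward normal cone of $b$ in $\ch(M)$ by cutting a bite out of its apex with half-spaces whose normals point from $b$ toward the vertices of $\ch(P)$. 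The cone contributes only $O(1)$ faces since $\ch(M)$ has constant complexity; the bite is combinatorially the portion of $\partial\ch(P)$ visible from $b$, which has no more faces than $\ch(P)$, and the number of faces of $\ch(P)$ is at most the number of faces of $\del_P$, hence $O(f)$. Thus $\vor_{M_\star}(b)$ has $O(f)$ faces, and summing over the $O(1)$ members of $B$ (absorbing the at most $(d+1)$-fold overcount) gives $|\del_{M_\star}\setminus\del_P| = O(f)$.

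The main obstacle is the third step: establishing the $t\to\infty$ limiting shape of $\vor_{M_\star}(b)$ and proving that its face count is controlled by the silhouette of $\ch(P)$ seen from $b$ rather than by all of $P$. Concretely, one must check that no input point interior to $\ch(P)$ ever remains Delaunay-adjacent to a boundary vertex as $t$ grows, and that the apex-cut contributes only the visible faces of $\ch(P)$ without lower-dimensional faces proliferating past $f$; the convexity of $\ch(P)$ together with a visibility argument is what keeps this in check. A secondary, more routine point is arranging (in the second step) that the bounding region sits far enough out that its boundary is cheap to mesh.
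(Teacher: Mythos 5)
Your first two steps track the paper's reasoning (and the paper is just as terse about $|B|=O(1)$: it simply asserts that by construction $\ch(M)$ has constantly many vertices and, since $M$ is well-spaced, constantly many faces). The real divergence is step three, and that is also where the gap is. The paper never analyzes the limiting shape of $\vor_{M_\star}(b)$; it makes one combinatorial observation: every $\sigma\in\del_{M_\star}\setminus\del_P$ splits as $\sigma=S\sqcup T$ with $S$ a face of $\ch(M)$ and $T$ a face of $\ch(P)$, so the number of such simplices is at most $|\ch(M)|\cdot|\ch(P)|=O(1)\cdot O(f)$, using $\ch(P)\subseteq\del_P$. The fact you flag as your ``main obstacle'' --- that the input-point part of a mixed simplex is confined to the hull of $P$ --- is exactly the content of this decomposition; the paper packages it as a product count rather than a polyhedral limit, but does not prove it in any more detail than you do.

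As written, your step three is therefore not yet a proof: the convergence of $\tfrac{1}{t}\vor_{M_\star}(b)$, the identification of the ``bite'' with the visible part of $\partial\ch(P)$, and the claim that no interior input point stays adjacent to $b$ are all listed as things to be checked rather than checked, and the visibility framing is more than you need. The clean way to close it in your own framework: any $x\in\vor_{M_\star}(b)$ satisfies $\dist(x,P)^2\ge t+\|x-b\|^2\ge t$, so $x$ lies at distance at least $\sqrt{t}$ from $P$; since the bounded faces of $\vor_P$ all lie within some fixed distance of $P$, for $t$ large enough $x$ lies in an unbounded face of $\vor_P$. A face of $\vor_{M_\star}(b)$ on which the constraints from $p_1,\dots,p_k\in P$ are simultaneously tight is contained in $\vor_P(\{p_1,\dots,p_k\})$, which must then be an unbounded face of $\vor_P$; unbounded faces are dual to faces of $\ch(P)$, of which there are $O(f)$. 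Combined with the $O(1)$ constraints from other boundary vertices, this bounds the face count of $\vor_{M_\star}(b)$ by $O(f)$ with no limit or visibility argument. With that repair your per-vertex dual-cell count is a valid alternative to the paper's one-line product count, though it is strictly heavier machinery for the same conclusion.
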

    \begin{proof}
      Any simplex $\sigma\in \del_{M_\star}\setminus \del_P$ can be written as a disjoint union $\sigma =  S\sqcup T$ where $S\in \ch(M)$ and $T\in \ch(P)$.
      By construction, $\ch(M)$ has a constant number of vertices, and since $M$ is well-spaced, $|\ch(M)|$ is also a constant. 
      Since $\ch(P)\subseteq \del_P$, we have $|\ch(P)| = O(f)$.
      It follows that there can be at most $|\ch(M)|\cdot |\ch(P)| = O(f)$ such simplices.
    \end{proof}
  

  %
  %

\subsection{Running time analysis} 
\label{sub:running_time_analysis}

  \begin{theorem}[Running Time]\label{thm:runtime_guarantee}
    Given $n$ points $P\subset \R^d$, the \MeshVoronoi algorithm constructs the Voronoi diagram of $P$ in $O(f\log \spread \log n)$ time, where $f$ is the number of faces of $\vor_P$.
  \end{theorem}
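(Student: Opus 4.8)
The plan is to decompose the running time into the three phases and bound each one. The preprocessing phase is handled by the cited bounds: placing the input in a bounding box costs $O(1)$ extra points, and Sparse Voronoi Refinement produces the $\tau$-well-spaced superset $M$ with $|M| = O(n\log\spread)$ in $O(n\log\spread + |M|) = O(n\log\spread)$ time. Since $M$ is well-spaced, $\del_M$ has $O(|M|) = O(n\log\spread)$ simplices, so initializing the flip heap by checking each facet for a potential flip and inserting it costs $O(n\log\spread \log n)$ time (the $\log n$ from heap operations). The postprocessing phase removes $O(f)$ simplices by Lemma~\ref{lem:not_too_much_scaffolding}, which is negligible. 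So everything reduces to bounding the cost of the flipping phase.

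For the flipping phase, I would argue as follows. Each iteration pops one potential flip from the heap, tests its validity in $O(1)$ time (the determinant test of Section~\ref{sub:checking_potential_flips} is constant-size since $d$ is fixed), and if valid performs an $(i,j)$-flip that removes $i \le d+2$ simplices and inserts $j \le d+2$ new ones; each newly created facet spawns at most a constant number of new potential flips to insert into the heap. Thus the total work is $O((\text{number of heap operations}) \cdot \log(\text{heap size}))$, and the heap size is always at most a constant times the number of flips performed plus $|\del_M|$. Hence it suffices to show that the total number of flips performed over the entire run is $O(f\log\spread)$; then the per-operation cost is $O(\log(f\log\spread)) = O(\log n)$ under the model's assumption that $\log\log\spread = O(\log n)$ and the fact that $f = \Omega(n)$, giving the claimed $O(f\log\spread\log n)$.

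The key combinatorial claim — that at most $O(f\log\spread)$ flips occur — is the main obstacle, and I expect it to rely on a charging/packing argument rather than anything in this excerpt. The intuition is that each flip is localized in space and time near some Steiner point being absorbed, and the Steiner points lie in a ball hierarchy with $O(\log\spread)$ "scales"; at each scale, the number of Steiner cells that can ever interact with a given output Voronoi cell is bounded by a packing argument (using the $O(1)$ facet bound for well-spaced cells from~\cite{miller99radius} and the feature-size relation $\fs_P(v)\le K\fs_M(v)$), so each of the $f$ output faces is charged $O(\log\spread)$ times. Making this precise requires a lemma — presumably the "Lemma~\ref{lem:packing}" alluded to earlier — bounding how many distinct combinatorial configurations a bounded-aspect-ratio neighborhood passes through as the weights sweep; since each flip corresponds to a degenerate configuration (a vanishing determinant, hence a moment where $d+2$ points become co-spherical in the weighted sense), and the weight function is a single linear parameter, the number of such moments affecting any fixed local region is controlled. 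I would isolate this packing/charging estimate as the technical heart of the argument and treat the reduction above as the routine scaffolding around it.
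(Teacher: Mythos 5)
Your decomposition is exactly the paper's proof of Theorem~\ref{thm:runtime_guarantee}: preprocessing and heap seeding cost $O(n\log\spread)$, the flipping phase reduces to counting potential flips (initial facets plus $O(1)$ new ones per performed flip), and each heap operation costs $O(\log(f\log\spread)) = O(\log n)$. The combinatorial claim you defer is precisely the paper's Lemma~\ref{lem:flip_counting} together with Lemma~\ref{lem:potential_flips}, and your sketch of it is the right one --- the paper makes it precise by observing that each flip's orthocenter lies in the intersection of a face of $\vor_P$ with a face of $\vor_M$, and then bounding the number of cells of $\vor_M$ meeting any cell of $\vor_P$ by $O(\log\spread)$ via the packing argument of Lemma~\ref{lem:packing}.
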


  Before proceeding to the proof of the running time guarantee, we will first bound the number of combinatorial changes of the weighted Voronoi diagram and thus also the number of heap operations.
  These are the main technical points of the analysis.
  
  We first prove a relatively standard packing bound on the number of Voronoi cells of $\vor_M$ that can intersect the Voronoi cell of an input point.
  \begin{lemma}[Packing]\label{lem:packing}
    Let $M$ be a $\tau$-well-spaced superset of $P$ satisfying the feature size condition that $\fs_P(q)\le K \fs_M(q)$ for all $q\in M$ for some constants $\tau$ and $K$.
    Let $p$ be any point of $P$.
    Then, the number of points $q\in M$ such that $\vor_M(q)\cap \vor_P(p)\neq\emptyset$ is $O(1 + \log(\aspect_P(p)))$. 
  \end{lemma}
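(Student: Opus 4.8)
The plan is to partition the points $q \in M$ with $\vor_M(q) \cap \vor_P(p) \neq \emptyset$ into $O(\log \aspect_P(p))$ groups by the scale of their feature size relative to $p$, and then use a standard packing argument to show each group has $O(1)$ members. Let me set up the notation: let $\rho$ be the in-radius and $R$ the out-radius of $\vor_P(p)$, so $\aspect_P(p) = R/\rho$. Any $q$ with $\vor_M(q)$ meeting $\vor_P(p)$ has $\|p - q\| \le 2R$ roughly (since $q$ is within distance $R$ of some point of $\vor_P(p)$, and that point is within $R$ of $p$; one should be a little careful here and use that $q \in \vor_M(q)$ together with the triangle inequality, but a constant-factor bound of this form holds). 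So all such $q$ lie in a ball of radius $O(R)$ around $p$.

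First I would record the two key quantitative facts. (1) A lower bound on how spread out the $q$'s are: since $M$ is $\tau$-well-spaced, the in-radius of $\vor_M(q)$ is at least $\fs_M(q)/\tau$ up to constants, and more importantly the pairwise distances between nearby points of $M$ are comparable to their feature sizes. Concretely, for $q, q' \in M$ nearby, $\|q - q'\| = \Omega(\fs_M(q))$, because otherwise $q'$ would be a second point too close to $q$, contradicting that $\fs_M$ measures the distance to the second-nearest point of $M$ (and well-spacedness prevents $\vor_M(q)$ from being too skinny). (2) The feature-size condition $\fs_P(q) \le K\fs_M(q)$, combined with a lower bound on $\fs_P(q)$ for $q$ near $p$: if $\vor_M(q)$ meets $\vor_P(p)$ at a point $x$, then $x$ is within $\rho$ of... no — rather, $\dist(x, P \setminus \{p\})$ is at least $\rho$ since $x \in \vor_P(p)$ means $p$ is a nearest neighbor and the in-ball of radius $\rho$ around $p$ is inside the cell, so actually $\fs_P(x) \ge \rho$; hence $\fs_P(q) \ge \fs_P(x) - \|q - x\| \ge \rho - O(\fs_M(q))$ by the Lipschitz property of feature size, which after rearranging gives $\fs_M(q) = \Omega(\rho)$ (absorbing constants using $\fs_P \le K \fs_M$). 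So every relevant $q$ has feature size between $\Omega(\rho)$ and $O(R)$.

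Next I would bucket the relevant $q$'s into $O(\log(R/\rho)) = O(1 + \log \aspect_P(p))$ classes, where class $j$ consists of those $q$ with $\fs_M(q) \in [2^j \rho, 2^{j+1}\rho)$ (up to the implied constants). Within a fixed class $j$, all the points $q$ lie in a ball of radius $O(R)$ and have pairwise distances $\Omega(2^j \rho)$ from one another — wait, that only gives a bound of $(R/(2^j\rho))^d$ per class, which is not $O(1)$. The fix is to bucket not by $\fs_M(q)$ alone but to cover $\vor_P(p)$ by a constant number of balls at each scale: at scale $j$, cover the annulus of points at distance $\Theta(2^j \rho)$ from $p$ by $O(1)$ balls of radius $\Theta(2^j\rho)$ (this is where we use constant dimension), and observe that a point $q$ in class $j$ whose cell meets $\vor_P(p)$ must be within $O(2^j\rho)$ of $p$ — because its cell has in-radius $\Theta(2^j\rho)$ and meets $\vor_P(p)$, so if it were much farther than that from $p$ its cell could not reach... here I need the feature-size direction again to bound $\|p-q\|$ above in terms of $\fs_M(q)$, not just in terms of $R$. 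Concretely: $\|p - q\| \le \|p - x\| + \|x - q\|$ where $x \in \vor_M(q) \cap \vor_P(p)$, and $\|x - q\| \le$ out-radius of $\vor_M(q) \le \tau \cdot \mathrm{in\text{-}radius} = O(\fs_M(q)) = O(2^j \rho)$; and $\|p - x\| \le R$. That still has the bad $R$ term. The genuine resolution: $x \in \vor_P(p)$ and $x \in \vor_M(q)$ forces $\|x - p\| \ge \dist(x, M) = \|x-q\| - w \ge$ something, but also $\fs_P(x) \le K \fs_M(x) \le K(\fs_M(q) + \|x-q\|) = O(2^j\rho)$, while $\fs_P(x) \ge \dist(x, P\setminus\{p\})$... and if $\|x - p\|$ were huge, is $\fs_P(x)$ forced large? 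Only if $x$ is also far from all other input points. So I should split: the points $x$ of $\vor_P(p)$ with $\fs_P(x) = O(2^j\rho)$ lie within $O(2^j\rho)$ of $P \setminus \{p\}$, i.e. near the boundary of the cell at a controlled scale, and the union over relevant $q$ of such regions is coverable by $O(1)$ balls of radius $\Theta(2^j\rho)$ by a boundary-packing argument. I would then apply a volume/packing bound in each such ball: points of $M$ in class $j$ have pairwise distance $\Omega(2^j\rho)$, so $O(1)$ of them fit in a ball of radius $O(2^j\rho)$ in $\R^d$. Summing $O(1)$ over the $O(1 + \log \aspect_P(p))$ scales gives the bound.

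The main obstacle, clearly, is the argument in the previous paragraph: getting the *upper* bound $\|p - q\| = O(\fs_M(q))$ (or covering the contact region at each scale by $O(1)$ balls of that scale) rather than the crude $\|p-q\| = O(R)$. This is exactly the point where both well-spacedness of $M$ and the feature-size transfer $\fs_P \le K\fs_M$ must be used together, and where the Lipschitz-$1$ property of the feature size function does the real work. Everything else — the scale decomposition, the per-scale packing bound in constant dimension, and the final summation — is routine once that contact-localization is in place.
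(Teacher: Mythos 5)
Your proposal has a genuine gap, and to your credit you identify it yourself: everything hinges on the contact-localization bound $\|p-q\| = O(\fs_M(q))$ for every $q$ whose cell in $\vor_M$ meets $\vor_P(p)$, and your attempts to establish it stall on "the bad $R$ term." Without that bound, neither your bucketing by feature-size scale nor the fallback covering argument closes, as you note. The step you are missing is short. Take $x \in \vor_M(q)\cap\vor_P(p)$. Because $x$ lies in $\vor_P(p)$, the point $p$ is a nearest input point to $x$, so the distance from $x$ to the \emph{second}-nearest input point is exactly $\fs_P(x)$, and in particular $\|p-x\| = \dist(x,P) \le \fs_P(x)$. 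That replaces the crude $\|p-x\|\le R$ with $\|p-x\|\le \fs_P(x)$. Now chain: $\fs_P(x) \le \fs_P(q) + \|q-x\|$ by the $1$-Lipschitz property (applied at $q\in M$, where the hypothesis $\fs_P(q)\le K\fs_M(q)$ is actually available --- your attempted use of $\fs_P(x)\le K\fs_M(x)$ at an arbitrary point $x$ is not licensed by the lemma's hypotheses), and $\|q-x\| \le \tau\fs_M(q)$ by well-spacedness since $x\in\vor_M(q)$. Altogether $\|p-q\| \le \|q-x\|+\|p-x\| \le 2\|q-x\| + \fs_P(q) \le (2\tau+K)\fs_M(q)$. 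You had every ingredient of this chain on the page --- the inequality $\fs_P(x)\ge\rho$ you derived for the lower bound is the same mechanism run in the other direction --- but you never combined $\|p-x\|\le\fs_P(x)$ with the Lipschitz transfer to $q$.

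Once that bound is in hand, the paper's bookkeeping differs slightly from yours but is morally the same: it partitions the relevant $q$ into geometric annuli by distance $\|p-q\|$ from $p$ (rather than by $\fs_M(q)$), assigns each $q$ the ball $B_q = \ball(q,\gamma\|p-q\|)$ with $\gamma = \tfrac{1}{4\tau+2K}$, observes that $B_q\subset\vor_M(q)$ by the key bound so the balls are pairwise disjoint, and gets $O(1)$ points per annulus by a volume comparison. The number of nonempty annuli is $O(1+\log\aspect_P(p))$ because the nearest relevant $q$ is at distance at least $\fs_P(p)/K$ and the farthest at distance at most $2\,\aspect_P(p)\fs_P(p)$. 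Your scale decomposition and per-scale packing would work equally well with the key bound in place, so the missing step is the only substantive defect, but it is the heart of the lemma.
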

  \begin{proof}
    The proof will be by a volume packing argument.
    Let $M_p$ denote the set of points $q\in M\setminus \{p\}$ such that $\vor_M(q)\cap \vor_P(p)\neq\emptyset$.
    For any $q\in M_p$ and $x\in \vor_M(q)\cap \vor_P(p)$, we derive the following bound on the distance between $p$ and $q$.
    \begin{align*}
      \|p-q\| 
        &\le \|q-x\| + \|p-x\| \because{by the triangle inequality}\\
        &\le \|q-x\| + \fs_P(x) \because{$x\in \vor_P(p)$}\\
        &\le 2\|q-x\| + \fs_P(q) \because{$\fs_P$ is $1$-Lipschitz}\\
        &\le 2\tau\fs_M(q)+ \fs_P(q) \because{since $M$ is $\tau$-well-spaced}\\
        &\le (2\tau + K)\fs_M(q) \because{by the feature size condition}
    \end{align*}
    Define $\gamma:=\frac{1}{4\tau + 2K}$ so that the preceding bound may be written as 
    \[
      \gamma \|p-q\| \le \frac{1}{2} \fs_M(q).
    \]
    We partition the set $M_p$ into geometrically growing spherical shells $A_i$, where for each integer $i$, 
    \[
      A_i := \{q\in M_p \mid (1+\gamma)^{i-1} \le \|p-q\| \le (1+\gamma)^i\}.
    \]
    For each $q\in M_p$, define the ball $B_q := \ball(q, \gamma\|p-q\|)$.
    The definition of $B_q$ and the bound on $\|p-q\|$ imply
    \[
      B_q \subseteq \ball(q, \frac{1}{2}\fs_M(q))\subset \vor_M(q),
    \]
    and so the balls $B_q$ are pairwise disjoint.
    For $q\in A_i$, we further get that $B_q\subset \ball(p, (1+\gamma)^{i+1})$.
    Thus,
    \begin{align*}
      \vol(\ball(p,(1+\gamma)^{i+1})) 
        &\ge \vol\left(\bigsqcup_{q\in A_i}\vol(B_q)\right) \\
        &\ge |A_i| \vol(\ball(q, \gamma(1+\gamma)^{i-1})).
    \end{align*}
    It follows that $|A_i| \le (1+\gamma)^{2d}/\gamma^d$.
    The nearest point of $M_p$ to $p$ has distance at least $\fs_P(p)/K$ from $p$ by the feature size condition.
    So, for $i< \lfloor \log_{1+\gamma} (\fs_P(p)/K) \rfloor$, $A_i$ is empty.
    The farthest point of $M_p$ to $p$ has distance at most $2\,\aspect_P(p)\fs_P(p)$ from $p$ by the triangle inequality and the definition of $M_p$.
    So, for $i > \lceil \log_{1+\gamma} (2\,\aspect_P(p)\fs_P(p))\rceil$, $A_i$ is empty.
    Thus, there are at most 
    \begin{align*}
      \lceil\log_{1+\gamma} (2\,\aspect_P(p)\fs_P(p))\rceil - \lfloor\log_{1+\gamma} (\fs_P(p)/K)\rfloor \\
      = O(1 + \log (\aspect_P(p)))
    \end{align*}
    nonempty sets $A_i$.
    This completes the proof as we have shown $M_p$ can be decomposed into $O(1 + \log (\aspect_P(p)))$ sets, each of constant size.
  \end{proof}

  \paragraph{Counting Flips} 
    The main challenge in the analysis is to bound the number of flips that happen in the transformation from the Voronoi diagram of $M$ to the Voronoi diagram of $P$.
    The key to bounding this number is to observe that each such flip is witnessed by the intersection of a $k$-face of $\vor_M$ and a $(d-k)$-face of $\vor_P$ for some $k$.
    Thus, we count these intersections instead.
    This intuition is made precise in the following lemmas.
    The first bounds the number of flips performed by the algorithm.
    The latter bounds the number of potential flips considered by the algorithm, since not all potential flips are performed.
  
  \begin{lemma}[Flip Bound]\label{lem:flip_counting}
    Given $n$ points $P\subset \R^d$, the \MeshVoronoi algorithm performs $O(f\log \spread)$ flips, where $f=|\vor_P|$.
  \end{lemma}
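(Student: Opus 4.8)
The plan is to charge every flip performed by the algorithm to the intersection point of a face of $\vor_M$ with a complementary-dimensional face of $\vor_P$, and then to bound the number of such intersections using the packing estimate of Lemma~\ref{lem:packing}. So I would first pin down the geometry of a single performed flip, which is determined by its set $S$ of $d+2$ points together with the time $t^\star$ at which the cosphericity determinant that defines the flip time vanishes. Expanding that determinant by multilinearity in the row holding the values $\pi_{p_i}(0,t)$, the coefficient of $t$ is, up to sign, the determinant obtained by replacing that row with the indicator row of membership in $P$; this vanishes identically whenever $S\subseteq P$ or $S\cap P=\emptyset$, and in those two cases the predicate is a nonzero constant by general position, so no flip ever occurs there. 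Hence every performed flip splits as $S=S_{St}\sqcup S_P$ with $S_{St}$ (its Steiner points) and $S_P$ (its input points) both nonempty, and it occurs at a time $t^\star>0$ (a flip at $t^\star = 0$ would contradict general position of the unweighted mesh $M$). At time $t^\star$ the $d+2$ weighted points of $S$ share a common orthoball $\ball(c,r)$, with $c,r$ determined by $S$, and this orthoball is empty: taking $\epsilon\to 0$ in the emptiness of the orthoball of any $d$-simplex of the triangulation of $\ch(S)$ present just before the flip gives $\pi_m(c,t^\star)\ge r^2$ for all $m\in M$.

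Next I would extract the two Voronoi faces. Because $S_{St}\ne\emptyset$, $r^2=\|c-q\|^2\ge 0$ for any $q\in S_{St}$. The emptiness of the orthoball then says that every Steiner point $m$ satisfies $\|c-m\|^2\ge r^2$ with equality exactly when $m\in S_{St}$, while every input point $p$ satisfies $\|c-p\|^2\ge r^2+t^\star$, which strictly exceeds $r^2$ since $t^\star>0$, with equality exactly when $p\in S_P$. Therefore $S_{St}$ is exactly the set of points of $M$ nearest to $c$ and $S_P$ is exactly the set of points of $P$ nearest to $c$, so $c\in\vor_M(S_{St})\cap\vor_P(S_P)$; these are faces of $\vor_M$ and of $\vor_P$ whose dimensions sum to $d$. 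Since a Voronoi face determines its generating set (in general position), the map sending a performed flip to the pair $(\vor_M(S_{St}),\vor_P(S_P))$ is injective, so it suffices to bound the number of pairs $(G,F)$ where $G$ is a face of $\vor_M$, $F$ is a face of $\vor_P$, and $G\cap F\ne\emptyset$.

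For the count I would group the pairs by the face $F$ of $\vor_P$. Each such $F$ lies in a full-dimensional cell $\vor_P(p)$ for some $p\in P$, and any face $G$ of $\vor_M$ meeting $F$ meets $\vor_P(p)$, so all generators of $G$ lie in $M_p:=\{q\in M:\vor_M(q)\cap\vor_P(p)\ne\emptyset\}$ and $G$ is a face of one of the $|M_p|$ cells $\vor_M(q)$, $q\in M_p$. Since $M$ is well-spaced, each of those cells has $2^{O(d)}=O(1)$ faces, so at most $O(|M_p|)$ faces of $\vor_M$ meet $F$, and $|M_p|=O(1+\log\aspect_P(p))$ by Lemma~\ref{lem:packing}. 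Because the bounding box added in preprocessing makes $\vor_P(p)$ bounded with $\aspect_P(p)=O(\spread)$, this is $O(\log\spread)$ per face $F$. Summing over the $f$ faces of $\vor_P$ bounds the number of intersecting pairs, and hence the number of flips, by $O(f\log\spread)$.

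The step I expect to be the main obstacle is the passage from the empty-orthoball condition at the flip time to the two face memberships $c\in\vor_M(S_{St})$ and $c\in\vor_P(S_P)$: getting the strict-versus-nonstrict inequalities right relies on $t^\star>0$ and on $S_{St}$ and $S_P$ both being nonempty, which in turn relies on the determinant-coefficient observation that no flip is internal to $P$ or to the Steiner points. Once that is in place, the final count is a direct application of Lemma~\ref{lem:packing} together with the bounded face complexity of well-spaced Voronoi cells.
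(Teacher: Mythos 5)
Your proposal is correct and follows essentially the same route as the paper: each performed flip is charged to the orthocenter $c$, which lies in the intersection of a face of $\vor_M$ and a complementary-dimensional face of $\vor_P$, and these intersections are then counted per cell $\vor_P(p)$ via Lemma~\ref{lem:packing} and the constant face-complexity of well-spaced Voronoi cells. Your argument is somewhat more detailed than the paper's — in particular the determinant-coefficient observation showing that no flip is internal to $P$ or to the Steiner points, and the careful derivation of the two face memberships from the empty orthoball — but the decomposition and the key counting step are identical.
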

  \begin{proof}
    We observe that a flip occurs exactly when the weights cause some pair of adjacent simplices to encroach on each other.
    That is, the two simplices share a common orthoball.
    Let $U$ be the $d+2$ vertices comprising these simplices and let $c$ be the center of their orthoball.
    Let $k+1$ be the number Steiner points in $U$.
    The weights of the input points in $U$ are all equal and thus these points are all also equidistant from $c$. 
    So, $c$ is contained in a $k$-face of $\vor_P$.
    The Steiner points of $U$ have weight $0$, so they are equidistant from $c$ and closer to $c$ than any of the input points.
    So, $c$ is contained in a $(d-k)$-face of $\vor_M$, where $M$ is the bounded aspect ratio superset of $P$ constructed by the algorithm.
    Thus, the point $c$ is the intersection of a $k$-face of $\vor_P$ and a $(d-k)$-face $\vor_M$.
    It will suffice to bound the number of such intersections to get an upper bound on the number of flips.
    We will show that for each of the $f$ faces of $\vor_P$, there are $O(\log\spread)$ intersections with $\vor_M$ to be counted.

    Let $F$ be any $k$-face of $\vor_P$ that intersects a $(d-k)$-face of $\vor_M$.
    Then, $F$ also intersects a $d$-face of $\vor_M$.
    Let $\vor_P(p)$ be a $d$-face of $\vor_P$ containing $F$.
    Since the $d$-faces of $\vor_M$ have only a constant number of faces, it will suffice to bound intersections between $\vor_P(p)$ and the $d$-faces of $\vor_M$.
    From Lemma~\ref{lem:packing}, there are at most $O(\log\spread)$ such intersections.    
  \end{proof}

Thus, the key fact in the analysis is that  the  number of flips per Voronoi cell of $\vor_P$ is bounded by the number of cells of $\vor_M$ it intersects.  
See Figure~\ref{fig:analysis} for an example.

  \begin{figure}[htbp]
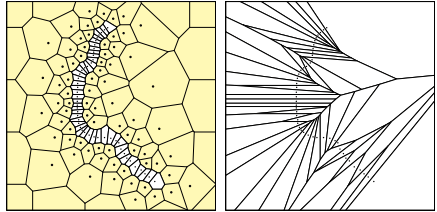

    \centering
      \includegraphics[width = 0.22\textwidth]{figures/local_1.pdf}
        \includegraphics[width = 0.22\textwidth]{figures/local_23.pdf}
    \caption{To bound the number of flips we need only count the
      number of intersection between the two Voronoi diagrams $\vor_M$ on the left and $\vor_P$ on the right. }
    \label{fig:analysis}
  \end{figure}

  \begin{lemma}[Potential Flip Bound]\label{lem:potential_flips}
    Given $n$ points\shortversion{\\}$P\subset \R^d$, the \MeshVoronoi algorithm sees $O(f\log \spread)$ potential flips, where $f=|\vor_P|$.
  \end{lemma}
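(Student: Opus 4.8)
The plan is to bound the number of potential flips by charging each one either to a flip that is actually performed or to a newly created facet that triggered its insertion into the flip heap. Recall that a potential flip enters the heap only when a facet is created, either during preprocessing (from the facets of $\del_M$) or as a byproduct of performing an actual flip. So the total number of potential flips seen is at most the number of facets of $\del_M$ plus the number of facets created by the $O(f\log\spread)$ flips of Lemma~\ref{lem:flip_counting}. Since each bistellar flip replaces a bounded number of $d$-simplices with a bounded number of $d$-simplices, it creates only $O(1)$ new facets, so the flips of Lemma~\ref{lem:flip_counting} contribute only $O(f\log\spread)$ potential flips.

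It remains to bound $|\del_M|$, the number of facets present at the start. Since $M$ is $\tau$-well-spaced, $\vor_M$ has bounded aspect ratio, so every $d$-cell of $\vor_M$ has at most $2^{O(d)} = O(1)$ facets; hence $|\del_M| = O(|M|) = O(n\log\spread)$. This is already $O(f\log\spread)$ because $f \ge n$ (the Voronoi diagram of $P$ has at least one cell per input point, in fact at least $n$ faces). Combining the two contributions, the algorithm sees $O(n\log\spread) + O(f\log\spread) = O(f\log\spread)$ potential flips.

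The one place that needs care is the charging of newly created facets to performed flips: I must make sure that every potential flip the algorithm ever pops off the heap was inserted either at preprocessing time or as a consequence of an actual (valid, performed) flip — not as a consequence of an \emph{invalid} potential flip, since invalid ones are simply discarded and create nothing. Re-reading the algorithm description in Section~\ref{sub:flipping_out_steiner_points}, new facets are checked for potential flips only when a flip is performed, so this holds. The main obstacle is thus essentially bookkeeping: confirming that the only two sources of heap insertions are the $O(|M|)$ initial facets and the $O(1)$-per-performed-flip creations, and then invoking Lemma~\ref{lem:flip_counting} together with the well-spacedness bound $|M| = O(n\log\spread)$ and the trivial inequality $f \ge n$ to collapse everything to $O(f\log\spread)$.
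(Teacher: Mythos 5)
Your proof is correct and follows essentially the same approach as the paper: charge the initial potential flips to the $O(n\log\spread)$ facets of $\del_M$ and the rest to the $O(1)$ new facets created per performed flip, then invoke Lemma~\ref{lem:flip_counting}. You merely make explicit two steps the paper leaves implicit (that $|\del_M| = O(|M|)$ via bounded aspect ratio, and that $f \ge n$ absorbs the $O(n\log\spread)$ term), which is fine.
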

  \begin{proof}
    At the start of the algorithm, there is one potential flip for each facet of $\vor_M$.
    This is at most $O(n\log \spread)$ potential flips.
    During the rest of the algorithm, there are at most ${{d+2} \choose d} = O(d^2)$ new potential flips each time a real flip occurs, one for each new facet that appears.
    By Lemma~\ref{lem:flip_counting}, this is $O(f\log \spread)$.
  \end{proof}
  

  \paragraph{The main result.} 
    We are now ready to prove the running time guarantee, Theorem~\ref{thm:runtime_guarantee}.

  \begin{proof}[Proof of Theorem~\ref{thm:runtime_guarantee}]    
    It will suffice to bound the running time of each phase of the algorithm.
    The preprocessing takes $O(n \log \spread)$ from the running time of Sparse Voronoi Refinement. 
    Seeding the heap also takes $O(n\log \spread)$ time as each facet is checked in constant time and the amortized cost of heap insertion is $O(1)$.
    There are at most two heap operations for each potential flip, one insertion and one deletion.
    The total number of potential flips is $O(f\log \spread)$ as shown in Lemma~\ref{lem:potential_flips}.
    Deleting the minimum element from a heap with $O(f\log \spread)$ elements requires $O(\log (f\log \spread)) = O(\log n)$ time.
    Thus, the total time for all heap operations is $O(f\log \spread \log n)$ as desired.
  \end{proof}
  
  

  \section{Conclusion} 
\label{sec:conclusion}

  The algorithm we have presented is a direct combination of Delaunay mesh generation and kinetic data structures.
  The output-sensitive running time  depends on the log of the spread.
  This is the usual cost of doing a geometric divide and conquer.
  It remains an interesting question if it is possible to replace the $\log\spread$ term with a $\log n$ by some more combinatorial divide and conquer.
  The other $\log$-term coming from the heap operations may also permit some improvement as it is clear that many flips are geometrically independent and so their ordering is not strict.
  
  Another possible direction of future work is to exploit hierarchical meshes to keep the number of Steiner points linear~\cite{miller11beating}.
  However, it is not known how to leverage this into an improvement over the bounds presented here.
  At best it replaces the $\log\spread$ with $\max\{\log \spread, n\}$.

  Going forward, we hope to extend the methods here to the convex hull problem.
  Likely, this will require significant new ideas, but it may be possible to achieve a similar output-sensitive running time of $O(f \log \spread\log n)$ for computing the convex hull of $n$ points with $f$ faces.

  
  \bibliographystyle{plain}
  \bibliography{output_sensitive_voronoi}
  
\end{document}